\documentclass[article]{IEEEtran}

\usepackage{fancyhdr}
\usepackage{subcaption}
\usepackage{physics}
\usepackage{amsmath,amsthm,amssymb}
\usepackage{filecontents}
\usepackage{siunitx}
\usepackage{graphicx} 
\usepackage{optidef}
\usepackage{bbm}
\usepackage{amsmath}
\usepackage{amsfonts}
\usepackage{optidef}
\usepackage{algorithmic,eqparbox,array}
\usepackage{xcolor}
\usepackage{tikz}
\usepackage{lipsum}

\usetikzlibrary{decorations.pathreplacing,calc}
\usepackage[ruled,vlined]{algorithm2e}
\usepackage{cite}[sorting=none]

\theoremstyle{definition}
\theoremstyle{remark}

  \newtheorem{lemma}{Lemma}

\newtheorem{remark}{Remark}

\begin{document}
\title{Information Bottleneck Meets Quantization: \\Finite Rate Analysis and Optimal Designs}
\author{Francesco Binucci, and Paolo Banelli \thanks{This work was supported by Next Generation EU under the Italian NRRP, Mission 4, Component 2, Investment 1.3, CUP E83C22004640001, partnership on “Telecommunications of the Future” (PE00000001 - program “RESTART”). 
The authors are with the Engineering Department, University of Perugia, Via G. Duranti, 93 06125, Italy, (email: \{francesco.binucci,paolo.banelli\}@unipg.it).
This work is a substantial extension of the preliminary results shown in \cite{eusipco2026accepted}.
}}
\maketitle

\begin{abstract} The Information Bottleneck (IB) is a well established framework that looks for a latent compact representation of a data source, by trading rate and data-size representation, for information accuracy with respect to another target data. The Gaussian IB (GIB) is its simple closed form solution, when the target is jointly Gaussian with the source. Actually, in many practical problems the latent representation has to be stored or represented by a finite number of bits, while the optimal (G)IB solution has not. First, this manuscript theoretically analyzes the effect of scalar and vector quantization of the GIB latent representation, and its impact on the (dis)informativeness with respect to the target data. Then, task-oriented quantization designs are proposed by (jointly) reformulating the GIB optimization problem under a finite-rate constraint on the latent representation. Simulation results on MMSE regression problems confirm the effectiveness of the proposed quantization designs, which show significant gains with respect to more heuristic, or separate, quantization designs of the standard GIB latent representation. Finally, the paper extends the task-oriented philosophy to non-Gaussian settings, by properly modifying the cost function used in variational auto-encoders (VAEs) of IB-inspired vector quantizers.       
\end{abstract}
\begin{IEEEkeywords}
Gaussian Information Bottleneck, Quantization, Task-Oriented Compression, Semantic Qauntization.  
\end{IEEEkeywords}

\section{Introduction}

\subsection{Context and Motivation}
The Information Bottleneck (IB) principle \cite{tishby2000information} is a fundamental information-theoretic framework, rooted in rate--distortion theory \cite{cover1999elements}, that aims at extracting a compressed representation of a random variable while preserving the information that is relevant for a given inference task. Unlike classical source coding approaches, whose focus is  minimizing distortion for source reconstruction, the IB principle seeks to minimize the representation (size) cost  subject to a constraint on the task-relevant information, typically quantified in terms of mutual information with the target variable.

This paradigm has become a cornerstone in the design of task/goal-oriented (GO) communication systems \cite{strinati20216g, hu2024asurvey}, where the objective is to transmit only the information that is strictly necessary to perform a downstream task, rather than faithfully reconstructing the original source. Such an approach enables significant savings in both communication and computational resources, making it particularly attractive for emerging applications in intelligent networks and distributed inference systems \cite{di2023go,gunduz2023beyond}.

The IB principle has been extensively investigated in recent years, demonstrating strong effectiveness in supervised feature extraction and GO communication frameworks \cite{shao2021learning, sun2023adaptive, xie2023robust,shao2022task,hassanpur2025Adeep}. However, most existing works on IB-based GO communications rely on analog transmission schemes, where the extracted features are directly modulated and transmitted over the channel \cite{gunduz2023beyond,shao2022task,sun2023adaptive}. While analytically convenient, these approaches are not readily compatible with practical communication systems, which predominantly rely on digital modulation and coding schemes.

Despite recent efforts on the design of digital task-oriented and semantic communication schemes \cite{huang2025d,liu2024ofdm}, only few of them have explicitly addressed the digital implementation of IB-based strategies, as in \cite{xie2023robust}. Anyway, independently of digital communication, and possibly  storage, a comprehensive theoretical understanding of quantization strategies for IB-based representations, remains largely unexplored. This is probably due to the intrinsic difficulty of the problem, which involves information-theoretic quantities that, even for the non-quantized IB problem, generally do not admit closed-form expressions. A notable exception is the multivariate Gaussian regression setting, where the classical IB problem admits a closed-form solution, referred to as the Gaussian Information Bottleneck (GIB) \cite{chechik2003Gaussian}.

In this work, we firstly build upon the GIB framework to investigate the impact of finite-rate quantization on IB-extracted representations. Specifically, we analyze both scalar and vector quantization schemes, and then we propose more effective task-oriented bit allocation strategies, under limited rate budgets.
Furthermore, inspired by  vector-quantized variational autoencoders (VQ-VAE) \cite{van2017neural}, we extend the proposed framework to non-Gaussian settings by  incorporating task-relevant information measures in the VQ-VAE design. 

\subsection{Related Work}
The quantization problem for  information bottleneck (IB) frameworks has been investigated in \cite{meidlinger2014relation,Rate-information-optimal-GIB}, where the relationship between classical rate--distortion (R--D) theory and the Gaussian Information Bottleneck (GIB) is analyzed for channel-output estimation tasks. Therein it is shown that the GIB solution can be interpreted as an R--D compression scheme, preceded by a square-root Wiener pre-filter. However, such results are restricted to channel-output estimation problems and do not generalize to broader inference settings involving the prediction of target variables from high-dimensional observations.

A related research direction \cite{hassanpour2020forward,ReducedComplexityOptimizationIB,mahvari2021scalable} considers IB-based quantization for distributed estimation under communication-rate constraints. These methods typically rely on iterative optimization procedures and/or high-dimensional discrete mappings, such as vector quantizers, resulting in significant computational complexity and limited analytical tractability.

The quantization problem has also been investigated within the deterministic information bottleneck (DIB) framework~\cite{strouse2017deterministic}, where the IB compression term $I(X;Z)$ is replaced by the entropy term $H(Z)$. This formulation enables deterministic encoders and simplifies optimization in discrete settings. Extensions to continuous and Gaussian variables have been proposed in~\cite{AParamtretricInformationBottleneck}; however, such approaches do not exploit the closed-form structure of the GIB and do not provide analytical solutions for rate-constrained quantization and bit allocation.

In practical machine learning applications, the IB principle is commonly implemented through neural architectures, such as the variational information bottleneck (VIB)~\cite{alemi2016deep}. These methods enable digital implementations of IB-based edge inference systems. For example, \cite{xie2023robust} proposes a VIB-based framework in which compressed latent representations are vector-quantized and transmitted to an edge server through a modulation and coding scheme (MCS). However, the quantization stage is treated as non-trainable, and the problem of optimal vector quantization is not addressed. A similar methodology is adopted in~\cite{zhang2026vector} for over-the-air computation scenarios.

Another relevant line of research is represented by VQ-VAE  approaches~\cite{van2017neural}. These methods rely on encoder--decoder architectures jointly trained with a vector-quantized codebook and have recently attracted interest in task-oriented communication systems. While many existing approaches remain rooted in reconstruction-oriented representation learning~\cite{van2017neural,meng2025channel,lyu2025vq}, some recent works incorporated downstream task objectives for classification-oriented semantic communications ~\cite{Hu2023Robust, chao2025task}. Nevertheless, these methods still rely on generative reconstruction-driven paradigms and do not explicitly address information-theoretic rate allocation and quantization design for IB-based representations.

Finally, the digital encoding of IB-extracted features has been investigated in~\cite{binucci2024opportunistic,binucci2024asilomar}, where the GIB solution is shown to implicitly reduce the entropy of the compressed representation, thereby enabling communication savings in goal-oriented communication systems. However, these works do not explicitly consider rate constraints or optimal quantization schemes.

\subsection{Our Contribution}
Extending the preliminary results derived in \cite{eusipco2026accepted}, we provide a deeper theoretical characterization and alternative designs for scalar and vector quantizers of latent representations produced by the (Gaussian) Information Bottleneck. The main contributions are summarized as follows:
\begin{itemize}
    \item Targeting multivariate Gaussian regression tasks, we thoroughly present the GIB-aided feature selection and quantization framework proposed in \cite{eusipco2026accepted}, leading to closed-form expressions for both the optimal encoder and the optimal rate allocation across components in separate (per-component) quantization schemes.
    
    \item Differently from \cite{eusipco2026accepted}, we show that, in a task-oriented design, it is preferable a rate allocation strategy that maximizes the mutual information between the latent representation and the target, rather than optimizing the standard IB objective. 
    
    \item Then, we extend the proposed scalar (bit) rate allocation framework to block-vector quantizers (VQs), where the features are grouped into blocks and jointly quantized. 
    
    \item We extend the approach to non-Gaussian settings, by proposing a data-driven quantizer that exploits a task-oriented reformulation of VQ-VAEs \cite{van2017neural}.
    
    \item We validate the proposed designs through simulations on both synthetic and real-world datasets, demonstrating that GIB- and MI-aided designs consistently outperform conventional reconstruction-oriented quantization schemes, while closely matching the theoretical predictions.
\end{itemize}
\subsection{Paper Organization}
The remainder of this paper is organized as follows. Section~\ref{sec:preliminaries} reviews the GIB framework and analyzes  a distortion-driven quantization of the GIB latent representation, providing a theoretical characterization of the induced quantization noise. Section~\ref{sec:quantization} first proposes a GIB-matched quantization method, that jointly optimizes the number of quantization bits and the GIB compression matrix, to achieve an effective trade-off between compression efficiency and inference performance. Then, 
the performance results of this strategy naturally leads to the improved MI-aided version, which better focuses on the informativeness of the quantized latent representation with respect to the target variable. Section~\ref{sec:block_vector_vq} extends the designs to vector quantizers, and to non-Gaussian settings by a task-oriented VQ-VAE. Section~\ref{sec:sim_res} reports simulation results that validate the proposed design for MMSE regression tasks, and the performance gain with respect to more heuristic and separate designs of the GIB and the quantizer. 
Finally, Section~\ref{sec:conclusion} concludes the paper and outlines directions for future work. We use lower- and upper-case letters to indicate column vectors and matrices, respectively, $p(\mathbf{x})$ to represent the associated probability density function.
\section{Gaussian Information Bottleneck Quantization}\label{sec:preliminaries}
This section summarizes the IB principle for Gaussian random variables, and analyzes the effect of a finite-precision representation of the latent features extracted by the conventional GIB approach, considering both scalar-quantization (SQ) (i.e., per-component) and vector-quantization (VQ) schemes. This theoretical foundation, will be further exploited in the next section to propose a quantization-aware GIB scheme, e.g., under a finite bit-budget constraint.

\subsection{Gaussian Information Bottleneck}\label{sec:gaussian_ib_section}

Consider a pair of random variables $\mathbf{x} \in \mathbb{R}^{n_x}$, and $\mathbf{y} \in \mathbb{R}^{n_y}$ denoting, respectively, the input and output of an inference task. The Information Bottleneck (IB) principle seeks a (probabilistic) mapping of $\mathbf{x}$ on a compact representation $\mathbf{z}$, which retains as much information as possible about the task variable $\mathbf{y}$ \cite{tishby2000information}. This goal can be formulated as
\begin{equation}\label{eq:ib_prob}
    \min_{p(\mathbf{z}|\mathbf{x})} I(\mathbf{z};\mathbf{x}) - \beta I(\mathbf{z};\mathbf{y}),
\end{equation}
where the mutual information $I(\mathbf{z};\mathbf{x})$ quantifies the degree of source compression, while the mutual information $I(\mathbf{z};\mathbf{y})$ captures the relevance of the compressed representation $\mathbf{z}$ to the (learning) target $\mathbf{y}$. The parameter $\beta$ is a Lagrange multiplier that tunes the compression--relevance trade-off \cite{tishby2000information}.

Assuming that $(\mathbf{x},\mathbf{y})$, are jointly characterized by a zero-mean multivariate Normal distribution,
%
%
with covariance matrices $\mathbf{\Sigma_x}$, $\mathbf{\Sigma_y}$, $\mathbf{\Sigma_{xy}}$, and conditional covariance matrix $\mathbf{\Sigma_{x|y}}=\mathbf{\Sigma_{x}}-\mathbf{\Sigma_{xy}\Sigma_{y}^{\text{-1}}\Sigma_{yx}}$, the optimal IB solution admits a linear stochastic encoding, expressed by \cite{chechik2003Gaussian}
\begin{equation}\label{eq:gib_mapping}
    \mathbf{z}=\mathbf{A}_{\beta}\mathbf{x}+\boldsymbol{\xi},
\end{equation}
which is also jointly Gaussian with $(\mathbf{x},\mathbf{y})$, and where $\boldsymbol{\xi}\sim\mathcal{N}(\mathbf{0},\mathbf{I})$ is a multivariate Gaussian noise term independent of $\mathbf{x}$. The projection matrix  $\mathbf{A}_{\beta} \in \mathbb{R}^{n_z \times n_x}$, is constructed from the eigenpairs $\{\lambda_i,\mathbf{v}_i\}_{i=1}^{n_x}$ of the CCA matrix $\mathbf{\Sigma}_{\mathbf{x}|\mathbf{y}}\mathbf{\Sigma}_{\mathbf{x}}^{-1}$ 
\begin{equation}
\label{eq:GIB-CompressionMatrix}
\mathbf{A}_{\beta}=
\bigl[\,\alpha_{1}\mathbf{v}_{1}^{{T}};\alpha_{2}\mathbf{v}_{2}^{{T}};\dots;\alpha_{n_z}\mathbf{v}_{n_z}^{{T}}\,\bigr], \hspace{6pt}
\beta_{n_z}^{c} < \beta \le \beta_{n_z+1}^{c},
\end{equation}
where $\{\mathbf{v}_i\}$ denotes the left (unitary) eigenvectors, ordered according to theirs  eigenvalues $0<\lambda_i\leq \lambda_{i+1} \leq 1$, $n_z\leq n_x$, $\beta_i^{c}=\frac{1}{1-\lambda_i}$ are the critical values for $\beta$ to activate a CCA direction, $\alpha_{i}=\sqrt{\frac{\beta(1-\lambda_{i})-1}{\lambda_{i}p_{i}}}$ are the loading factor for each component, and $p_{i}=\mathbf{v}_{i}^{\mathsf{T}}\mathbf{\Sigma}_{\mathbf{x}}\mathbf{v}_{i}$. As $\beta$ increases, the number of rows $n_z$ of $\mathbf{A}_{\beta}$ increases accordingly, thereby prioritizing relevance over compression. Note that, in practice, due to CCA arguments, only $n_z \leq \min \{n_x,n_y\}$ are meaningful, and in this manuscript we will consider $n_z \leq n_y \leq n_x$, without restriction of generality.
Finally, note that the additive noise term $\boldsymbol{\xi}$ is introduced in the original GIB formulation to enable analytical tractability \cite{chechik2003Gaussian}. If included, taking into account $E\{\xi_i^2\}=1$, this would induce a signal-to-distortion ratio on each latent component $z_i$ expressed by $s_i=E\{ \alpha_i^2(\mathbf{v}_i^\mathsf{T}\mathbf{x})^2\}$$=\frac{\beta(1-\lambda_{i})-1}{\lambda_{i}}$, that are ordered in a decreasing fashion, i.e., $s_i \geq s_{i+1}$. However, in practical inference settings, explicitly injecting this noise is typically unnecessary, since $\boldsymbol{\xi}$ is independent of the target variable $\mathbf{y}$, and thus it would not convey any useful information for the learning task \cite{binucci2024opportunistic}.

\subsection{Quantization Noise Model in the Scalar Case}\label{sec:quant_scalar_noise}
The GIB framework naturally provides a supervised linear feature extractor that can be placed as a pre-processing block for subsequent inference. However, in some applications the extracted features $\mathbf{z}$ must be mapped onto a finite-bit representation, before being either stored, or transmitted through a suitable modulation and coding scheme, as it happens for instance in digital GOCs \cite{liu2024ofdm,huang2025d, xie2023robust}, where the bit-rate is constrained by the channel capacity. This calls for explicitly accounting for quantization, both in the features design and the associated bit (rate) allocation, possibly under stringent budget constraints.

Let's start considering the $i$-th (scalar) Gaussian feature $z_i \sim \mathcal{N}(0,s_i)$ at the output of the GIB compression stage, without any noise, i.e., $\xi_i=0$. For a memory-less Gaussian source under mean-squared error (MSE) distortion, classical rate--distortion (RD) theory states that the minimum bit-encoding rate $r_i(D_i)$ required to achieve a distortion $D_i \in (0,s_i]$ is~\cite{cover1999elements}
\begin{equation}\label{eq:rd_gaussian}
    r_i(D_i)=\frac{1}{2}\log_2\!\left(\frac{s_i}{D_i}\right),
\end{equation}
where $r_i$ is expressed in bits/sample. By inversion, the corresponding distortion--rate relationship is
\begin{equation}\label{eq:dr_gaussian}
    D_i(r_i)=s_i 2^{-2r_i}=\frac{\beta(1-\lambda_{i})-1}{\lambda_{i}}2^{-2r_i},
\end{equation}
due to the power loading $s_i=\alpha_i^2$ induced by the GIB on each compressed component.
While \eqref{eq:rd_gaussian}--\eqref{eq:dr_gaussian} describe an asymptotic limit—achievable in principle by long-block vector quantization—they are routinely used as accurate surrogates in the sufficiently high resolution regimes. More specifically, for sufficiently fine scalar quantization, the RD-optimal Shannon bound is approached and the quantization distortion is well captured by an additive 
noise model, whose variance matches the target distortion $D_i$~\cite{gray2002quantization}, as expressed by 
\begin{equation}\label{eq:scalar_quantization_additive_model}
    z_{q_i}=z_i+\eta_{q_i},
\end{equation}
where $\eta_{q_i}\sim\mathcal{N}(0,D_i)$ is independent of $z_i$~\cite{cover1999elements}.
Therefore, in the following we adopt this Guassian-noise additive test-channel approximation, that lets us to deal with analytically tractable expressions, included those for mutual-information. Actually, for entropy-coded dithered quantization (ECDQ) \cite{zamir2002information}, the additive-noise representation is exact and the distortion excess due to non-Gaussian quantization noise is bounded by its divergence from Gaussianity \cite{zamir2002information}, as confirmed by 
simulations in Section {\ref{sec:gaussian_approx_results}}.

Assuming to have a fixed rate constraint, i.e., a finite bit-budget $R_{\text{tot}}$ for the quantized representation $\mathbf{z}_q$, a possible solution is to jointly optimize the number of retained GIB components $n_z$ and the number of bits assigned to each one, according to a minimum distortion criterion.  To this end, we formulate the following optimization problem
\begin{equation}\label{eq:optimal_GIB_quantization}
\begin{aligned}
\min_{n_z,\ \{r_i\}_{i=1,\ldots,n_z}}\quad
&\sum_{i=1}^{n_z}s_i^{(n_z)}2^{-2r_i}+\sum_{i=n_z+1}^{n_y}s_i^{(n_y)}\\
\text{s.t.}\quad
& \text{(a)}\ \sum_{i=1}^{n_z} r_i \leq R_{\mathrm{tot}}, \quad \text{(b)}\ r_i\geq0,\forall i,\\
\end{aligned}
\end{equation}
where $s_i^{(n)}=\frac{\beta_{n+1}^{c}(1-\lambda_{i})-1}{\lambda_{i}}$ is the signal power that each component would receive using totally $n$ components, i.e., using the GIB with $\beta=\beta_{n+1}^c$. The left side of the cost function in \eqref{eq:optimal_GIB_quantization} represents the \emph{quantization} MSE on the $n_z$ retained components of the GIB, while the right side expresses the \emph{representation} MSE, as the power that would be assigned by a full GIB to the last $(n_y-n_z)$ components, which are conversely neglected using just the first $n_z$. For any fixed $n_z$, only the left side of the cost function has to be optimized, and, for non-integer $\{r_i\}\in \mathbb{R}^+$, the solution is known as the reverse water-filling criterion \cite[Th. 13.3.3]{cover1999elements} applied to the $n_z$ GIB-compressed output.
Thus, we may run the optimizer for all the possible $n_z \in [1,n_y]$ and evaluate the best solution in terms of the minimum latent representation distortion, or the informativeness with respect to the final estimation of $\mathbf{y}$.
However, note that this policy, for any fixed $n_z$, can also (further) autonomously select the effectively used components by assigning zero bits, i.e., $r_i=0$, to those less important to minimize the overall distortion of the GIB latent representation.
Thus, we may also think to a simplified procedure that, rather than testing all the possible $\beta$s (e.g., all the possible values for $n_z$), it selects all the possible GIB components, i.e., $n_z=n_y$, and lets the optimization in \eqref{eq:optimal_GIB_quantization} to effectively select those with $r_i>0$. We will further discuss about this option in the simulation results.

\section{Quantization-aware Gaussian Information Bottleneck}\label{sec:quantization}

\begin{figure*}[ht]
    \centering
    \includegraphics[width=0.85\linewidth]{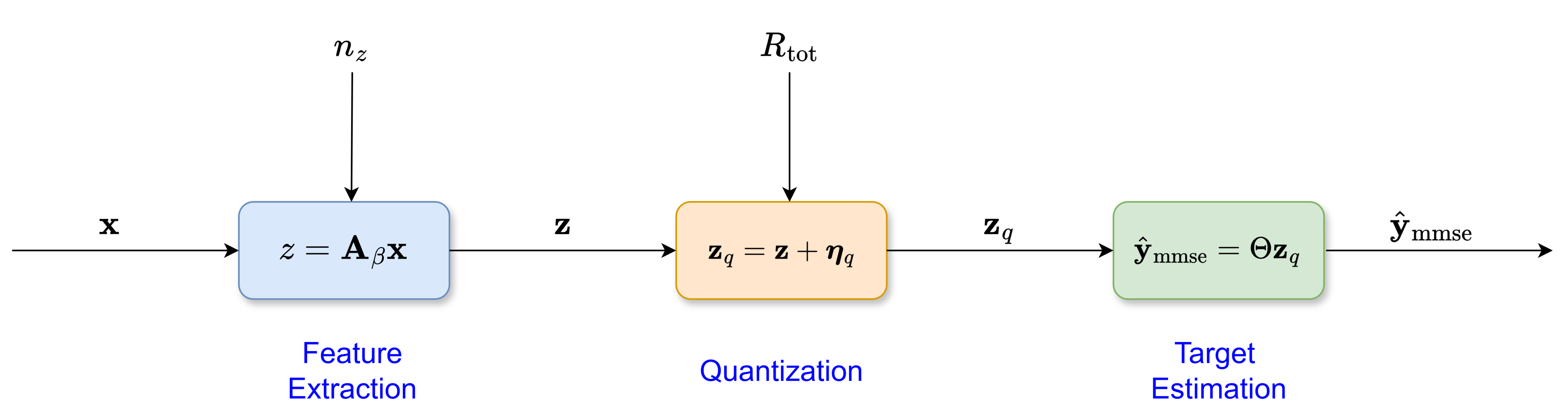}
    \caption{The goal-oriented feature extraction and quantization pipeline. 
The input $\mathbf{x}$ is mapped to a latent representation $\mathbf{z} \in \mathbb{R}^{n_z}$, 
which is quantized under a total rate constraint $R_{\mathrm{tot}}$ to obtain $\mathbf{z}_q$. 
The target $\mathbf{y}$ is then estimated from $\mathbf{z}_q$ via an L-MMSE estimator.}
    \label{fig:system_model}
\end{figure*}

Herein, we exploit the additive Gaussian quantization noise models in \eqref{eq:scalar_quantization_additive_model},
 which preserves the joint Gaussianity between $\mathbf{z}_q$, $\mathbf{x}$, and $\mathbf{y}$. As detailed in the following subsections, this property ensures that the optimality of the linear GIB solution, can be retained also by directly incorporating the quantization in the GIB design, rather than forcing the quantization on top of the GIB solution in \eqref{eq:GIB-CompressionMatrix}.
Thus, rather than minimizing the distortion with respect to the unquantized representation $\mathbf{z}$, as we did in \eqref{eq:optimal_GIB_quantization} by \eqref{eq:dr_gaussian},  we are looking for a quantizer design that directly allocates the available bit budget for the compressed representation $\mathbf{z}_q$ according to the IB objective, i.e., maximizing a trade-off of the mutual information $I(\mathbf{z}_q;\mathbf{y})$ between the quantized features and the target variable, preserving some degree of information $I(\mathbf{z}_q;\mathbf{x})$ with respect to the original $\mathbf{x}$. 

Moreover, the preservation of the GIB structure allows the recovery of the target variable $\mathbf{y}$ via a (linear) minimum mean-squared error (MMSE) estimator, as detailed in \ref{sec:L-MMSE estimator}. We first consider independent SQ for each component of $\mathbf{z}_q$, and then extend the approach to a Gaussian VQ design. The overall feature extraction, quantization, and estimation pipeline is illustrated in Figure~\ref{fig:system_model}.

\subsection{GIB-Aided Scalar Quantization (SQ)}\label{sec:separate_scheme}
Herein we design a GIB-oriented, component-wise \emph{separable} quantization scheme, where the quantization operator factorizes across the entries of the representation. To this end, we consider the following constrained optimization problem
\begin{equation}\label{eq:gib_quantized_problem}
\begin{aligned}
\min_{p(\mathbf{z}_q|\mathbf{x})}\quad
& I(\mathbf{z}_q;\mathbf{x})-\beta I(\mathbf{z}_q;\mathbf{y})\\
\text{s.t.}\quad
& \text{(a)}\ \sum_{i=1}^{n_z} r_i \leq R_{\mathrm{tot}}
\quad\ \text{(b)} \  r_i \geq 0, \ \ \forall i. 
\end{aligned}
\end{equation}
The objective is to determine a stochastic quantization mapping $p(\mathbf{z}_q | \mathbf{x})$ that satisfies the aggregate rate constraint in~(a), while constraint (b) enforces a non-negative rate allocation to each component. 

\begin{remark}[Implicit feature selection under an aggregate rate budget]
The aggregate rate budget constraint~(a) may force the optimal solution to assign zero rate to a subset of components (i.e., $r_i = 0$).
As a consequence, only the components that are most relevant to the task are quantized
while the remaining ones are effectively discarded.
This induces an implicit, task-oriented feature selection mechanism, inherently driven by the available quantization budget $R_{\mathrm{tot}}$.
\end{remark}

By assuming that the quantized 
representation $\mathbf{z}_q$ is described according to the additive distortion model in \eqref{eq:scalar_quantization_additive_model}, we are practically stating  
that (also) $\mathbf{z}_q$ is jointly Gaussian with $\mathbf{x}$, and consequently it can be modelled by the linear regression
\begin{equation}\label{eq:noisy_ib_mapping}
\mathbf{z}_q=\tilde{\mathbf{A}}_{\beta}\mathbf{x}+\boldsymbol{\eta}_q.
\end{equation}
This practically means that any feasible quantizer in \eqref{eq:gib_quantized_problem}, if it adheres to \eqref{eq:scalar_quantization_additive_model}, has to preserve by \eqref{eq:noisy_ib_mapping} the linear structure of the GIB design. However, the difference is that the quantization noise $\boldsymbol{\eta}_q$, plays the role of the stochastic (compression) noise $\boldsymbol{\xi}$, which actually is necessary in \eqref{eq:gib_mapping} only to make the problem analytically tractable, while in practice it can be safely set to zero in GIB compression schemes \cite{binucci2024opportunistic}. Conversely, in our case, where the optimal solution may potentially allocate a different number of bits on each scalar component $z_{q_i}$, the (diagonal) quantization noise covariance matrix $\mathbf{\boldsymbol{\Sigma_\eta}}_q$ depends both on the total number $R_{\text{tot}}$ of quantization bits, and on how many are assigned to each component. Taking this fact into account, we can exploit the classical solution of the GIB, where, as detailed and proved in \cite{chechik2003Gaussian}, a non-identical covariance matrix of the Gaussian noise $\boldsymbol{\eta}_q$ leads to a projection matrix $\tilde{\mathbf{A}}_\beta$ expressed by 
\begin{equation}
\tilde{\mathbf{A}}_\beta=\mathbf{U}^T{\mathbf{D}^{1/2}}\mathbf{A}_{\beta}={\mathbf{D}}^{1/2}\mathbf{A}_{\beta},
\end{equation}
where, (see also  \eqref{eq:vq_noise_cov}) $\mathbf{U}=\mathbf{I}$ are the unitary eigenvectors of the diagonal covariance  matrix $\mathbf{\Sigma}_{\boldsymbol{\eta}_q}=\mathbf{UDU}^T=\mathbf{D} = \mathrm{diag}\left([D_1(r_1),\dots,D_{n_z}(r_{n_z})]\right)$.

This means that the cost function in problem \eqref{eq:gib_quantized_problem} has the same structure of eq. (10) in \cite{chechik2003Gaussian}. Consequently, as detailed in Appendix \ref{sec:appendix_scalar_quantization}, by defining the per-component signal-to-distortion ratio (SDR) as $\rho_i \triangleq s_i/D_i$, it can be proved that the corresponding bit-constrained IB design reduces to the following optimization problem
\begin{equation}\label{eq:go_gib_vq}
\begin{aligned}
&\min_{\beta,\{\rho_i\}_{i=1}^{n_z}}
 \ \ \frac{1}{2}\sum_{i=1}^{n_z}\log_2\!\left(1+\rho_i\right)
 - \beta
 \log_2\!\left(\frac{1+\rho_i}{1+\lambda_i\rho_i}\right) \\
& \hspace{-3pt} \text{s.t.} \ \  
\text{(a)}
\ \frac{1}{2}\sum_{i=1}^{n_z} \log_2(\rho_i)\leq R_{\mathrm{tot}}, \hspace{6pt } 
\text{(b)} 
\  1 \leq \rho_i \leq \rho_i^{\max} \   \forall i,
\end{aligned}
\end{equation}
where 
the optimal rate are $r_i^{\star}=\frac{1}{2}\log_2(\rho_i^{\star})$, and $\rho_i^{\max}$ is a feasibility bound for a unique solution, that is always attained, as detailed in Appendix \ref{sec:convexity_analysis}.
Thus, problem~\eqref{eq:go_gib_vq} determines the set of per-component SDRs, i.e., the inverse of the quantization MSE, that minimize the IB objective while enforcing a constraint $R_{tot}$ on the total rate 

Note that, in problem \eqref{eq:go_gib_vq}, the latent dimension $n_z$ is implicitly fixed by $\beta_{n_z+1}^{c}$, i.e., the 
value of $\beta$ that guarantees $n_z$ retained components through \eqref{eq:GIB-CompressionMatrix}. Therefore, for any fixed $\beta$, we are optimizing the distribution of the total bit-budget $R_{\text{tot}}$ across $n_z(\beta)$ different GIB-like components.
Thus, according to \eqref{eq:go_gib_vq}, at a first sight it seems reasonable to optimize over $\beta$, or equivalently over the number $n_z$ of retained components. However, as already observed in \eqref{eq:optimal_GIB_quantization} and confirmed in the simulation results, in our setting
the best solution is to choose the maximum meaningful $n_z$, e.g., $n_z=n_y$, avoiding the (useless) complexity of an exhaustive search for the best $n_z$, as further detailed in Appendix \ref{sec:mi_maximization_derivations}.
The solution can be obtained via standard constrained optimization arguments, with a Lagrangian expressed by
\begin{equation}
\begin{aligned}
\mathcal{L}\! &\left(  \{\rho_i,\gamma_i,\nu_i\}_{i=1}^{n_z},\eta\right)
=\sum_{i=1}^{n_z} f(\rho_i)
+ \sum_{i=1}^{n_z}\gamma_i\!\left(\rho_i-\rho_i^{\max}\right)\\
&\qquad \quad + \sum_{i=1}^{n_z}\!\nu_i\!\left( 1-\rho_i\right)+ \eta\!\left(\sum_{i=1}^{n_z} r_i(\rho_i)\!-\!R_{\mathrm{tot}}\right),
\end{aligned}
\end{equation}
where $f(\rho_i)$ is the primal objective function of \eqref{eq:go_gib_vq}, while $\gamma_i\geq 0$, $\nu_i\geq 0$, and $\eta\geq 0$ are the dual variables associated with the 
constraints,
As detailed in Appendix~\ref{sec:convexity_analysis}, the non-convex problem \eqref{eq:go_gib_vq} admits an equivalent convex reformulation, which guarantees existence and uniqueness of the solution, as well as its computation.

Specifically, by enforcing the KKT conditions~\cite{boyd2004convex}, complementary slackness implies that, whenever the rate constraint is inactive (i.e., $\sum_{i=1}^{n_z} r_i(\rho_i^{\star})<R_{\mathrm{tot}}$), the corresponding multiplier satisfies $\eta^{\star}=0$. In this case, the optimal per-component SNR admits the closed-form expression
\begin{equation}\label{eq:snr_allocation_rule}
\rho_i^{\star}
=\left[\frac{\beta^{c}_{n_z+1}(1-\lambda_i)-1}{\lambda_i}\right]_{1}^{\rho_i^{\max}},
\end{equation}
where $[x]_{a}^{b}$ =$\min\{\max\{a,x\},b\}$ is a clipping operator in $[a,b]$.
Conversely, when the total-rate constraint is active, i.e., $\sum_{i=1}^{n_z} r_i(\rho_i^{\star})=R_{\mathrm{tot}}$, then $\eta^{\star}\neq 0$ and, as detailed in Appendix \ref{sec:appendix_scalar_quantization}, the optimal SNR is given by
\begin{equation}\label{eq:snr_allocation_rule_tight_constraint}
\rho_i(\eta^\star)
=\left[\frac{-B_i(\eta^{\star})+\sqrt{\Delta_i(\eta^{\star})}}{2A_i(\eta^{\star})}\right]_{1}^{\rho_i^{\max}},
\end{equation}
where $B_i(\eta)\triangleq 1-\beta^{c}_{n_z+1}(1-\lambda_i)+\eta(\lambda_i+1)$,
$A_i(\eta)\triangleq \lambda_i(1+\eta)$, and
$\Delta_i(\eta)\triangleq B_i(\eta)^2-4A_i(\eta)\eta$, with
the optimal multiplier $\eta^\star>0$ that is fixed by satisfying
\begin{equation}
R(\eta^\star)\triangleq \sum_{i=1}^{n_z} r_i\big(\rho_i^\star(\eta^\star)\big)-R_{\mathrm{tot}}=0.
\end{equation}

When $R_{\mathrm{tot}}$ is stringent, the optimal multiplier $\eta^{*}$ enforces the budget by progressively reducing the optimal $\rho_i^{*}$ (hence $r_i^{*}$), potentially driving some components to the minimum-rate boundary, e.g., setting $r_i^*=0$. This implements a budget-driven activation of a subset of features. To practically compute the optimal Lagrange multiplier $\eta^\star$, we introduce the following lemma.

\begin{lemma}\label{lem:monotonicity_lemma}
For every component that is not clipped by the projection on the feasible set, the unconstrained solution $\tilde\rho_i(\eta)$, i.e., the 
argument in rhs of \eqref{eq:snr_allocation_rule_tight_constraint}, is strictly decreasing in $\eta$. Consequently, the projected optimum $\rho_i^\star(\eta)=\big[\tilde\rho_i(\eta)\big]_{1}^{\rho_{\max}}$ is non-increasing in $\eta$ for all $i=1,\dots,n_z$.
\end{lemma}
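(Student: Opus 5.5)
The plan is to treat $\tilde\rho_i(\eta)$ as the larger root of the quadratic stationarity equation that yields \eqref{eq:snr_allocation_rule_tight_constraint}, and to differentiate it implicitly in $\eta$. Fix a component $i$ and write $\beta=\beta^{c}_{n_z+1}$. Define
\begin{equation*}
g_i(\rho,\eta)\triangleq A_i(\eta)\rho^2+B_i(\eta)\rho+\eta .
\end{equation*}
Clearing denominators in the KKT stationarity condition $\partial\mathcal{L}/\partial\rho_i=0$, for an unclipped component where $\gamma_i=\nu_i=0$, gives $g_i(\rho,\eta)=0$. I take $A_i$, $B_i$, $\Delta_i$ to be exactly as in Appendix~\ref{sec:appendix_scalar_quantization}. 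By construction,
\begin{equation*}
\tilde\rho_i(\eta)=\frac{-B_i(\eta)+\sqrt{\Delta_i(\eta)}}{2A_i(\eta)} .
\end{equation*}
Working with the implicit relation $g_i(\tilde\rho_i(\eta),\eta)=0$ avoids differentiating the square-root formula directly, which would be messy.

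The first step computes the two partial derivatives. In $\eta$, $g_i$ is affine: $A_i$, $B_i$ and the constant term are all linear in $\eta$. Collecting the $\eta$-coefficients gives
\begin{equation*}
\frac{\partial g_i}{\partial\eta}=\lambda_i\rho^2+(1+\lambda_i)\rho+1=(1+\rho)(1+\lambda_i\rho),
\end{equation*}
which is strictly positive for every $\rho\ge 1$ (indeed for every $\rho>0$). In $\rho$, we have $\partial g_i/\partial\rho=2A_i(\eta)\rho+B_i(\eta)$. Evaluated at the larger root $\rho=\tilde\rho_i(\eta)$, this equals exactly $\sqrt{\Delta_i(\eta)}$. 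By the implicit function theorem, wherever $\Delta_i(\eta)>0$ we obtain
\begin{equation*}
\frac{d\tilde\rho_i}{d\eta}=-\frac{(1+\tilde\rho_i)(1+\lambda_i\tilde\rho_i)}{\sqrt{\Delta_i(\eta)}}<0,
\end{equation*}
which is the strict decrease claimed for unclipped components.

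The main obstacle is justifying $\Delta_i(\eta)>0$ on the set of $\eta$ where component $i$ is unclipped. There $\tilde\rho_i(\eta)$ is a real number in $(1,\rho_i^{\max})$, so $\Delta_i\ge 0$ automatically. I would exclude the double-root case $\Delta_i=0$ using the feasibility bound $\rho_i^{\max}$ and the convex reformulation of Appendix~\ref{sec:convexity_analysis}. That reformulation makes the per-component objective strictly convex in the transformed variable. Strict convexity means the stationary point is a strict minimiser, which corresponds to a simple root of $g_i$, i.e.\ $\partial g_i/\partial\rho\neq 0$. If this link turns out to be delicate, a fallback handles an isolated point with $\Delta_i=0$ through continuity of $\tilde\rho_i$: a continuous function with negative derivative except at finitely many points is still strictly decreasing.

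Finally, the clipping map $x\mapsto[x]_1^{\rho_i^{\max}}$ is non-decreasing and continuous. Its composition with a strictly decreasing function is therefore non-increasing. It is constant exactly on the $\eta$-ranges where the component is clipped at $1$ or at $\rho_i^{\max}$. This holds for every $i=1,\dots,n_z$, and yields the second claim of Lemma~\ref{lem:monotonicity_lemma}. As a consequence, $R(\eta)$ is non-increasing, which justifies a bisection search for $\eta^\star$.
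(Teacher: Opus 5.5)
Your proof is correct, and it takes a different route from the paper's. The paper uses an exchange (revealed-preference) argument. It writes $\rho_i^\star(\eta)$ as the minimiser of $f_i(\rho)+\eta\,r_i(\rho)$ and states the two optimality inequalities for $\eta_1\le\eta_2$. Adding them gives $(\eta_1-\eta_2)\bigl(r_i(\rho_1)-r_i(\rho_2)\bigr)\le 0$, hence $r_i(\rho_1)\ge r_i(\rho_2)$, and $\rho_1\ge\rho_2$ because $r_i$ is increasing.

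That route never uses the closed form, $A_i$, $B_i$ or $\Delta_i$. Clipping is automatic when the minimisation is over $[1,\rho_i^{\max}]$, and the argument is unaffected by non-uniqueness or jumps of the minimiser. Its cost is that it gives only weak monotonicity, so the ``strictly decreasing'' half of the lemma is not actually delivered; one would need to add that if $\rho_1=\rho_2$ were interior, stationarity at both multipliers forces $(\eta_2-\eta_1)r_i'(\rho_1)=0$. It also concerns the true per-component minimiser, which coincides with $[\tilde\rho_i]_1^{\rho_i^{\max}}$ only if that formula is the minimiser.

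Your implicit differentiation works directly on the formula in the statement. It gives strictness with an explicit rate, $d\tilde\rho_i/d\eta=-(1+\tilde\rho_i)(1+\lambda_i\tilde\rho_i)/\sqrt{\Delta_i}$. You do not even need the KKT derivation, since $\tilde\rho_i$ is by definition the larger root of $A_i\rho^2+B_i\rho+\eta$.

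There are two points to tighten.

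\textbf{Excluding $\Delta_i=0$.} Do not rely on Appendix~\ref{sec:convexity_analysis} for this. For an active component, $\Delta_i(\eta)$ is a quadratic in $\eta$ with two positive roots $\eta_-<\eta_+$. At $\eta_-$ the double root can lie inside $(1,\rho_i^{\max})$. For example, take $\lambda_i=1/4$ and $\beta=10$ in $B_i$. Then $\eta_-\approx 2.68$ and $\tilde\rho_i(\eta_-)\approx 1.71$, whereas the $\rho_i^{\max}$ formula there gives about $54$ for the same $\beta$. This double root is exactly the inflection point of the per-component objective in $\log\rho$, so no strict-convexity claim can rule it out. Your continuity fallback is what carries the argument. $\tilde\rho_i$ is continuous on $[0,\eta_-]$ and has a negative derivative on $[0,\eta_-)$, so the mean value theorem makes it strictly decreasing there.

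\textbf{Where $\Delta_i<0$.} On $(\eta_-,\eta_+)$ the root $\tilde\rho_i$ is not real, so your composition step needs a convention. Setting $\rho_i^\star=1$ there is justified: $g_i(\rho,\eta)>0$ for all $\rho$, so the per-component Lagrangian is increasing in $\rho$. For $\eta\ge\eta_+$ one has $B_i>0$, both roots are negative, and the clipped value is again $1$. Hence $[\tilde\rho_i]_1^{\rho_i^{\max}}$ is non-increasing on all of $\eta\ge 0$.
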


\begin{proof}
See Appendix~\ref{sec:monotonicity_proof}.
\end{proof}

Since, for the considered Gaussian rate functions, each $r_i(\rho)$ is non-decreasing in $\rho$, Lemma~\ref{lem:monotonicity_lemma} implies that $r_i\big(\rho_i^\star(\eta)\big)$ is non-increasing in $\eta$, and thus $R(\eta)$ is non-increasing in $\eta$. Therefore, whenever the problem is feasible and the constraint is active, the equation $R(\eta)=0$ admits a unique solution $\eta^\star$, which can be efficiently found via the bisection method.

\noindent\textit{Remark:} The optimal design is uniquely characterized by the set of per-component SNRs
$\{\rho_i^\star\}_{i=1}^{n_z}$, with $\rho_i \triangleq s_i/D_i$. Hence, the solution is invariant to any scaling that preserves $\rho_i$. To obtain an implementable encoder/quantizer
pair, one must therefore fix either the component power $\{s_i\}$ or the target distortion levels $\{D_i\}$.
Specifically, if the features powers $\{s_i\}$ are fixed (e.g., by the chosen linear mapping or by a
normalization step), the optimal distortions follow as $D_i^\star = s_i/\rho_i^\star$. Conversely, if the
distortions powers $\{D_i\}$ are fixed by the quantizer design, the required component variances are
$s_i^\star = \rho_i^\star D_i$, which can be enforced by rescaling the representation. In both cases, the
corresponding per-component rate is consistent with~\eqref{eq:rd_gaussian}, i.e.,
$r_i^\star = \frac{1}{2}\log_2\!\big(\rho_i^\star\big)$ (up to the adopted quantization model and any integer-bit
rounding).\footnote{Note that the classical GIB loadings $s_i=\alpha_i^2$ in \eqref{eq:gib_mapping}, derived under unit-noise assumptions, are recovered
by setting $D_i=1$ for all $i$, i.e., by enforcing the same unitary distortion variance on all the components, and setting a sufficiently high $R_{\text{tot}}$, e.g., a bit-budget that would induce a global SNR higher than, or equal to, that one induced by the GIB stochastic mapping in \eqref{eq:GIB-CompressionMatrix}.}
Under this normalization, the GIB transformation is optimal in the sense of~\cite{chechik2003Gaussian}, and the
effect of quantization is entirely captured by the induced SNRs $\{\rho_i^\star\}$.

\subsection{Mutual Information Aided Quantization}\label{sec:mi_aided_quantization}
As already observed in \cite{binucci2024opportunistic}, the use of the GIB for goal-oriented quantization is appealing, as the GIB objective minimizes a rate proxy, namely the mutual information $I(\mathbf{x},\mathbf{z})$, under task-relevant performance constraints. However, two main limitations arise when dealing with quantization: $i)$ the original GIB formulation does not explicitly account for a finite rate budget, since $I(\mathbf{x},\mathbf{z})$ is not directly matched to the actual number of bits available to encode the latent representation; $ii)$ feature selection is governed by the trade-off parameter $\beta$, which determines the relevance of the extracted components through the associated loadings $\{\alpha_i\}_{i=1}^{n_z}$.

From the rate-allocation rules in~\eqref{eq:snr_allocation_rule} and~\eqref{eq:snr_allocation_rule_tight_constraint}, it follows that the resulting rate assignment depends on the GIB-induced representation and, therefore, indirectly on~$\beta$. Consequently, the quantization process may activate fewer features than the maximum number potentially allowed by the overall rate budget~$R_{\mathrm{tot}}$, leading to an inefficient utilization of the available rate resources. This observation motivates the investigation of an alternative bit-allocation strategy that  directly aims at maximizing the mutual information~$I(\mathbf{z}_q,\mathbf{y})$.

As detailed in Appendix~\ref{sec:mi_maximization_derivations}, under multivariate Gaussian assumptions, the transformation maximizing the mutual information~$I(\mathbf{z}_q,\mathbf{y})$ remains linear and still represented by the matrix of left normalized CCA eigenvectors. Hence, also in this case,
the linear transformation structure of the GIB framework is preserved. We therefore consider the following rate-allocation problem
\begin{equation}\label{eq:MI-quantization}
\begin{aligned}
    &\min_{\{\rho_i\}_{i=1}^{n_y}}
    -\frac{1}{2}\sum_{i=1}^{n_y}
    \log_2\left(\frac{1+\rho_i}{1+\lambda_i\rho_i}\right)\\
    &\text{s.t.} \;
    \text{(a)} \;
    \frac{1}{2}\sum_{i=1}^{n_y}\log_2(\rho_i)
    \leq R_{\mathrm{tot}},
    \qquad
    \rho_i \geq 1,\ \forall i.
\end{aligned}
\end{equation}
Specifically, we aim to find a rate allocation that maximizes the mutual information $I(\mathbf{z}_q,\mathbf{y})$ under an aggregated rate constraint (a). Differently from the GIB-aided approach, the number of quantized components is not fixed \emph{a priori}. Instead, the formulation naturally selects the optimal number of active components by assigning zero rate to less relevant dimensions. Said in another way, we are considering the IB optimization problem with $\beta \xrightarrow{} \infty$, without using the proxy $I(\mathbf{x},\mathbf{z}_q)$ to enforce the latent representation entropy, which instead is implicitly fixed by the rate-constraint.  
 Following the derivations reported in Appendix~\ref{sec:mi_maximization_derivations}, it can be shown that the optimal number of bits per component $\{r_i\}$ are determined by the optimal per-component SNRs $\{\rho_i\}$, expressed by 
 \begin{equation}\label{eq:optimal_rho_mi}
\rho_i=\left[\frac{-C_i(\eta)+\sqrt{C_i^2(\eta)-4\eta\lambda_i}}{2\eta\lambda_i}\right]_{1}^{\infty},
\end{equation}
where $\eta \geq 0$ is the Lagrange multiplier associated with the rate constraint, while $C_i(\eta)\triangleq-\frac{(\lambda_i-1+\eta+\eta\lambda_i)}{2\eta\lambda_i}$. From the KKT conditions~\cite{boyd2004convex}, it follows that $\eta^\star=0$ if the rate constraint is inactive. Otherwise, $\eta^\star$ is obtained by solving $
R(\eta)\triangleq \sum_{i=1}^{n_y} r_i\big(\rho_i^\star(\eta)\big)-R_{\mathrm{tot}}=0$, 
which, by similar arguments in the previous subsection, can be efficiently computed via iterative bisection algorithms.

\section{Block Vector Quantization Approaches}\label{sec:block_vector_vq}
So far, we have focused on a separate scalar quantization scheme for the features extracted by the GIB projection.
Interestingly, the aggregate rate constraint (\ref{eq:gib_quantized_problem}a) naturally induces a feature-selection mechanism, as it enables the optimal solution to allocate zero rate to those components that cannot be reliably quantized under a limited rate budget.

While scalar quantization performs generally well when the features are uncorrelated, as it happens with CCA projections, it is however still suboptimal dealing with anisotropic sources, as those induced by the GIB loadings $\alpha_i^2$.
Furthermore, as shown in~\cite{zamir2002information}, a scalar entropy-coded dithered quantizer (ECDQ) operating in the high-rate regime suffers from a constant gap of approximately $0.255$ bits per component with respect to the optimal rate--distortion bound~\cite{cover1999elements}.
Thus, we extend the framework developed in the previous section by introducing a bit-allocation strategy for vector quantization (VQ), that could make better use of the available bits, according to the observations above.

\vspace{-6pt}
\subsection{Noise Model for Vector Quantization}

Let's first consider a single quantizer for the entire feature vector $\mathbf{z}\sim\mathcal{N}(\mathbf{0},\boldsymbol{\Sigma}_{\mathbf{z}})$, i.e., using an $n_z$-dimensional VQ. In the high-rate regime and for well-shaped VQs (e.g., good lattice quantizers), also the VQ output can be modeled by an additive-noise approximation \cite{zamir2002information}
\begin{equation}\label{eq:vq_additive_model}
\mathbf{z}_q=\mathbf{z}+\boldsymbol{\eta}_q,
\end{equation}
where $\boldsymbol{\eta}_q$ is uncorrelated with $\mathbf{z}$, and approximately a zero-mean white Gaussian noise, with covariance
\begin{equation}\label{eq:vq_noise_cov}
    \mathbf{\boldsymbol{\Sigma_\eta}}_q=\mathbb{E}\!\left[\boldsymbol{\eta}_q\boldsymbol{\eta}_q^{\mathsf{T}}\right]
    \approx D_{\mathrm{VQ}}(R)\,\mathbf{I}_{n_z},
\end{equation}
and where $D_{\mathrm{VQ}}(R)$ denotes the (per-component) MSE distortion at rate $R$ (bits/sample). Under high-rate quantization, $D_{\mathrm{VQ}}(R)$ admits the asymptotic characterization\footnote{Differently from \eqref{eq:dr_gaussian}, where the distortion is different on each feature and imposed by its GIB loading $\alpha_i^2$, herein it is the same, and jointly imposed by the geometric mean of all the GIB loadings $\{\alpha_i^2\}$ via $|\mathbf{\Sigma_z}|^{1/n_z}.$}~\cite{gray2002quantization,zamir2002information}
\begin{equation}\label{eq:HR-distortion}
    D_{\mathrm{VQ}}(R)
    \approx
    2\pi e \  G_{n_z}\,\,|\boldsymbol{\Sigma}_{\mathbf{z}}|^{1/n_z}\,2^{-2R/n_z},
\end{equation}
where $G_{n_z}$ is the normalized second moment (NSM) capturing the VQ cell geometry~\cite{gray2002quantization}. For good lattice quantizers, $G_{n_z}\to 1/(2\pi e)$ as $n_z\to\infty$, and \eqref{eq:HR-distortion} asymptotically approaches the Shannon RD limit (applied per dimension) in \eqref{eq:dr_gaussian}~\cite{gray2002quantization,zamir2002information}. 

\vspace{-6pt}
\subsection{Sub-blocks based Vector Quantization}
Also in this case, under a finite rate constraint $R\leq R_{\text{tot}}$, it could make sense to pass to the quantizer the whole features obtained by the GIB with $n_z=n_y$, and let the quantizer compute the best codebook centroids. In practice, due to the high computational complexity to train the VQ codebook (i.e., finding the $2^{R_{\textrm{tot}}}$ centroids $\in \mathbb{R}^{n_z}$), as well as the memory requested to store them, we instead adopt a block-VQ scheme, where VQ is applied in a block-wise fashion on sub-vectors $ \{\mathbf{z}_b\}_{b=1,\ldots,B}$, extracted as a partition of the whole GIB latent representation $\mathbf{z}$.
Specifically, the total rate budget $R_{\mathrm{tot}}$ is distributed across $B$ blocks, each quantized using $R_{\textrm{0}}=\left\lfloor\frac{R_{\textrm{tot}}}{B}\right\rfloor$ bits. \footnote{Using the same rate $R_0$ on each block is instrumental to use the same codebook for all the blocks, as in standard VQ-VAE architectures \cite{van2017neural}. Despite having fewer degrees of freedom, if properly designed, a single VQ still achieves satisfactory performance, as detailed and shown in the simulations.}   
Each feature $z_i$ is assigned to a specific block through the binary assignment variable
\begin{equation}
        \tau_{ib}=\begin{cases}
           1, & z_i \in \mathbf{z}_b, \\
           0, & \text{otherwise}.
        \end{cases}
\end{equation}
Under the high-rate distortion approximation, the distortion associated with the $b$-th block is given by
\begin{equation}\label{eq:high_rate_vq_distortion}
    D_{b}(N_b)= 2\pi e \ G_{n_b}\left|\mathbf{\Sigma}_{\mathbf{z}_b}\right|^{1/N_\textrm{b}}2^{-2R_0/N_\textrm{b}},
\end{equation}
where $\mathbf{\Sigma}_{\mathbf{z}_b}$ denotes the covariance matrix of the features assigned to the $b$-th block, and $N_\textrm{b}=\sum_{i=1}^{n_z}\tau_{ib}$ represents the number of features assigned to the $b$-th block. 

By leveraging the additive Gaussian approximation for the quantization noise and the structure of the optimal GIB mapping~\cite{chechik2003Gaussian}, the resulting optimization problem can be formulated as\footnote{We focus here on MI maximization because, as discussed in the previous sections, it is the best approach in rate-limited regimes. Nevertheless, the same approach can be applied to a GIB-aided block-VQ.}
\begin{equation}
\label{eq:block_vq_gib_final}
\begin{aligned}
\min_{\{\tau_{ib}\}} \quad
& -\frac{1}{2}\sum_{b=1}^{B}\sum_{i=1}^{n_y}
\tau_{ib}\log_2\left(\frac{1+\rho_{ib}}{1+\lambda_i\rho_{ib}}\right) \\[2pt]
\text{s.t.}\quad
& \text{(a)}~~ B\leq  \left\lfloor{\frac{R_{\mathrm{tot}}}{R_0}}\right\rfloor, \hspace{6pt} N_b=\sum_{i=1}^{n_y}\tau_{ib} \\[2pt]
& \text{(b)}~~ \sum_{b=1}^{B}\tau_{ib}=1,\quad
1\leq N_b\leq N_{\max},
\quad \forall i,b \\[2pt]
& \text{(c)}~~ \tau_{ib}\in\{0,1\}, \quad
\rho_{ib}=\frac{s_i}{D_b(N_b)}, \quad \forall i,b .
\end{aligned}
\end{equation}

Problem~\eqref{eq:block_vq_gib_final} together with~\eqref{eq:high_rate_vq_distortion} assigns the compressed features to different quantization blocks in order to maximize the mutual information $I(\mathbf{z}_q,\mathbf{y})$, while satisfying the overall rate budget imposed by constraint~(a). Constraint~(b) ensures that each feature is assigned to a single block, whose size ranges from scalar quantization, i.e., $N_{\textrm{b}}=1$, to the maximum allowed block size, i.e., $N_{\max}\leq N - B +1$. Finally, constraint~(c) restricts the feasible set to be binary, and defines the per-component SDR $\rho_{ib}$.

Problem~\eqref{eq:block_vq_gib_final} is an NP-hard mixed-integer non-linear program. Nevertheless, a low-complexity heuristic solution can be built by first properly permuting the features, which are successively organized in contiguous blocks. Specifically, let
$\boldsymbol{\pi}=(\pi_1,\ldots,\pi_{n_y})$ denote a permutation of
$\{1,\ldots,n_y\}$, and define the reordered feature vector as
\begin{equation}
    \tilde{z}_i = z_{\pi_i}, \qquad i=1,\ldots,n_y .
\end{equation}
Among the $n_y!$ possible permutations of the NP-hard problem, we adopt a comb-like interleaving
rule, defined as
\begin{equation}
    \boldsymbol{\pi}
    =
    \left(
    b+kB
    :
    b=1,\ldots,B,\;
    k=0,\ldots,\left\lfloor \frac{n_y-b}{B} \right\rfloor
    \right).
\end{equation}
This choice is motivated by the structure of the GIB solution, which yields
uncorrelated and anisotropic features ordered by decreasing variance, with
$s_i=\alpha_i^2$. Thus, a consecutive assignment would group components with similar
variances within the same block, resulting in unbalanced statistics among blocks. Conversely, the
proposed interleaved permutation evenly distributes high- and low-variance
components across the blocks, thereby promoting more comparable covariance
profiles and, consequently, inducing similar optimal VQs for all the blocks.\footnote{Although heuristic, this construction is consistent with high-rate VQ theory, which states in \eqref{eq:high_rate_vq_distortion} that the distortion of an $N_b$-dimensional Gaussian vector
scales with $\left|\boldsymbol{\Sigma}_{\mathbf z_b}\right|^{1/N_b}$. Thus,
balancing the component variances across blocks helps preventing strongly
unbalanced covariance determinants and improves the effectiveness of block-wise
quantization.}

The blocks are then formed by partitioning the reordered sequence
$(\tilde{z}_1,\ldots,\tilde{z}_{n_y})$ into $B$ consecutive groups, i.e., adding to \eqref{eq:block_vq_gib_final} the following contiguity constraint 
\begin{equation}
\label{eq:block_contiguity}
\tau_{ib}=\tau_{kb}=1
\ \Longrightarrow\
\tau_{jb}=1,
\quad
\forall\, b,\; 1\leq i<j<k\leq n_y,
\end{equation}
which enforces that any feature $j$ has to be assigned to the block if it is between two other features $i$, and $k$, belonging to the same block. Note that this contiguity constraint implies that all the features have to be used: this makes sense because a single-block VQ potentially would use all of them, exploiting the task-relevant information contained in each one. Thus, the algorithm optimizes the size of each contiguous block, leaving the permutation to design a meaningful preliminary aggregation criterion, which actually is nearly optimal for an equal and fixed size for all the blocks.\footnote{Note however that the optimal block sizes may be (slightly) different from each other. In order to use the same VQ for all the blocks, in the simulations we zero-pad the features of the shortest blocks (if any) to the longest ones.\label{sharedNote}}   

The contiguity constraint in  \eqref{eq:block_contiguity} casts the
feature-to-block assignment as a sequence segmentation problem, which can be solved by 
dynamic programming (DP) \cite{Tatti2013} to produce the optimal block boundaries
\(\{c_b^\star\}_{b=0}^{B}\), with \(c_0^\star=0\) and \(c_B^\star=n_y\).
Thus, the corresponding optimal variables $\{\tau_{ib}^{\star}\}$ are given by
\begin{equation}
\label{eq:tau_from_boundaries}
\tau_{ib}^\star
=
\mathbf{1}\{c_{b-1}^\star < i \le c_b^\star\},
\qquad
i=1,\ldots,n_y,\quad b=1,\ldots,B.
\end{equation}
Finally, the assignment in the original feature domain is obtained through the
inverse permutation:
\begin{equation}
\label{eq:alpha_original_domain}
\alpha_{fb}^\star
=
\tau_{\pi^{-1}(f),b}^\star,
\qquad
f=1,\ldots,n_y,\quad b=1,\ldots,B.
\end{equation}
As detailed in Appendix~\ref{app:dp_partition}, the complexity of the DP algorithm is 
$\mathcal{O}(BN_{\max}n_y)$.

\vspace{-6pt}
\subsection{Variational Approaches for Vector Quantization}\label{sec:variational_vq_approaches}
The quantization schemes considered so far rely on the assumption that both the input and output of the inference task follow a multivariate Gaussian distribution. However, this assumption is often violated in practical scenarios, where the underlying data distributions are generally unknown and closed-form expressions for the mutual information are not available. In such settings, vector-quantized variational autoencoders (VQ-VAEs)~\cite{van2017neural} provide a flexible and effective framework.

VQ-VAEs learn a discrete latent representation of the input data through the joint training of an encoder, a codebook, and a decoder. Specifically, the encoder maps the input $\mathbf{x}$ into a set of continuous latent vectors, which are subsequently quantized by replacing each vector with its nearest neighbor in a finite set of learned embeddings (the codebook). The decoder then reconstructs the desired output from the quantized latent variables.

In their original formulation, VQ-VAEs were designed for reconstruction tasks and are trained using the objective~\cite{van2017neural}
\begin{equation}\label{eq:vq_vae_loss}
    \begin{split}
    \mathcal{L}_{\mathrm{VQ\text{-}VAE}} &=
    \log p(\mathbf{x} | \mathbf{z}_q(\mathbf{x}))
    + \left\| \mathrm{sg}\!\left[\mathbf{z}_e(\mathbf{x})\right]-\mathbf{e} \right\|_2^2 \\
    &\quad + \gamma \left\| \mathbf{z}_e(\mathbf{x})-\mathrm{sg}(\mathbf{e}) \right\|_2^2,
    \end{split}
\end{equation}
where $\mathbf{z}_e(\mathbf{x})$ and $\mathbf{z}_q(\mathbf{x})$ denote the encoder output, and its quantized version obtained through nearest-neighbor lookup in the codebook $\mathbf{e}$, respectively, while $\mathrm{sg}[\cdot]$ represents the stop-gradient operator. The first term corresponds to the reconstruction objective, the second term updates the codebook embeddings towards the encoder outputs, and the third term 
is a commitment regularizer, that encourages the encoder outputs to remain close to the selected codebook vectors~\cite{van2017neural}.

Actually, as it happens in task-oriented communication systems \cite{strinati20216g}, the objective may not be an accurate source reconstruction, but rather the optimization of downstream inference performance. Accordingly, the task relevance should be explicitly accounted for in the objective functions. A natural choice is the variational information bottleneck (VIB) objective~\cite{tucker2022towards}, which can be extended to the vector-quantized setting, yielding the vector-quantized variational information bottleneck (VQ-VIB) formulation
\begin{equation}\label{eq:vib_loss}
    \mathcal{L}_{\mathrm{VIB}} \approx 
    \mathrm{KL}\!\left(q(\mathbf{z}_q | \mathbf{x}) \,\|\, p(\mathbf{z}_q)\right)
    - \beta \,\mathbb{E}_{q(\mathbf{z}_q|\mathbf{x})}\!\left[\log p(\mathbf{y}|\mathbf{z}_q)\right] .
\end{equation}
Here, $q(\mathbf{z}_q|\mathbf{x})$ denotes an approximate (Gaussian) conditional posterior over the quantized latent representation induced by the input $\mathbf{x}$. The KL-divergence term acts as a compression regularizer by constraining the information retained by the latent representation $\mathbf{z}_q$ about the input $\mathbf{x}$, while the second term is a task-oriented objective that promotes representations informative about the target variable $\mathbf{y}$. The latter can also be interpreted as maximizing a variational lower bound on the mutual information $I(\mathbf{z}_q;\mathbf{y})$, with the parameter $\beta$ that controls the trade-off between compression and inference performance, analogously to the GIB framework. 

Actually, as previously discussed and motivated for the Gaussian setting, in rate-limited regimes it makes sense to focus exclusively on maximizing the task-relevant information,
because the compression constraint is implicitly enforced by the available rate budget for the representation. 
Consequently, we propose a variational mutual-information-based vector quantizer (VMI-VQ), that directly promotes the preservation of task-relevant information, by 
the loss
\begin{equation}\label{eq:VMI-VQ-task_oriented_loss}
    \mathcal{L}_{\mathrm{task}} = 
    -\mathbb{E}_{q(\mathbf{z}_q|\mathbf{x})}\left[\log p(\mathbf{y} | \mathbf{z}_q(\mathbf{x}))\right],
\end{equation}
which corresponds to the negative log-likelihood of the target variable given the quantized latent representation. This objective can be interpreted as maximizing a variational lower bound on the mutual information $I(\mathbf{z}_q;\mathbf{y})$~\cite{barber2004algorithm}.

By replacing the (first) reconstruction term in rhs of \eqref{eq:vq_vae_loss} with $\mathcal{L}_{\mathrm{task}}$ (or, alternatively, with $\mathcal{L}_{\mathrm{VIB}}$), the model is encouraged to learn latent representations that are maximally informative about the target variable $\mathbf{y}$, rather than focusing on accurately reconstructing the input $\mathbf{x}$. This way, the resulting quantization scheme becomes inherently task-oriented, aligning the learned representation with the ultimate inference objective.\footnote{Note that the selection of the optimal trade-off parameter $\beta$ in~\eqref{eq:vib_loss} is generally non-trivial, often requiring cross-validation and careful fine-tuning to identify a suitable value~\cite{wu2020learnability}. Conversely, by replacing the reconstruction loss in~\eqref{eq:vq_vae_loss} with the task-oriented loss in~\eqref{eq:VMI-VQ-task_oriented_loss}, no additional hyperparameters are introduced, thereby enabling training to be carried out in a single run without hyperparameter tuning.}

\section{L-MMSE Estimation of the Target Variable}\label{sec:L-MMSE estimator}
The statistical characterization of the quantization noise introduced in the previous sections enables a straightforward computation of the linear minimum mean-squared error (L-MMSE) estimator for reconstructing the target variable $\mathbf{y}$ from the quantized features $\mathbf{z}_q$. In particular, the L-MMSE estimator depends solely on the second-order statistics of the pair $(\mathbf{z}_q, \mathbf{y})$, and is given by \cite{kay1993statistical}
\begin{equation}\label{eq:lmmse}
    \hat{\mathbf{y}} = \mathbf{\Theta}^T \mathbf{z}_q 
    = \mathbf{\Sigma}_{y z_q} \mathbf{\Sigma}_{z_q}^{-1} \mathbf{z}_q.
\end{equation}
Under the assumption of additive quantization noise that is uncorrelated with the input, it follows that $\mathbf{\Sigma}_{y z_q} = \mathbf{\Sigma}_{y z}$ and $\mathbf{\Sigma}_{z_q} = \mathbf{\Sigma}_z + \mathbf{\Sigma}_{\boldsymbol{\eta}_q}$. When the joint statistics of $(\mathbf{x}, \mathbf{y})$ are unknown, replacing the covariance matrices with their sample estimates yields the least-squares (LS) estimator \cite{kay1993statistical}.

Note that the use of the L-MMSE estimator is also motivated from an information-theoretic perspective. Indeed, it is well known that \cite{cover1999elements} 
\begin{equation}\label{eq:ep_inequality}
    \mathbb{E}\big[\|\mathbf{y} - \hat{\mathbf{y}}\|^2\big] 
    \geq \frac{n_y}{2\pi e} \exp\!\big(h(\mathbf{y} | \mathbf{z}_q)\big),
\end{equation}
and by $I(\mathbf{z}_q;\mathbf{y}) = h(\mathbf{y}) - h(\mathbf{y}| \mathbf{z}_q)$, we observe that reducing the estimation error leads to a reduction of the conditional entropy $h(\mathbf{y} | \mathbf{z}_q)$, which in turn increases the mutual information between $\mathbf{y}$ and $\mathbf{z}_q$. Moreover, under jointly Gaussian assumptions, the L-MMSE estimator coincides with the MMSE estimator and achieves the minimum mean-squared error among all estimators \cite{kay1993statistical}, thereby attaining the minimum achievable value of the left-hand side of \eqref{eq:ep_inequality}.

\section{Simulation Results}\label{sec:sim_res}
The proposed quantization approaches are validate herein, by simulations, for generic multivariate regression tasks.
\vspace{-12pt}
\subsection{Tasks 
Description}\label{sec:task_description}
The performance are evaluated on both synthetic and real-world datasets:
\begin{itemize}
    \item \textit{Synthetic dataset:} we consider a multivariate Gaussian regression dataset where $(\mathbf{x},\mathbf{y}) \sim \mathcal{N}(\mathbf{0}, \mathbf{\Sigma}_x, \mathbf{\Sigma}_y, \mathbf{\Sigma}_{xy})$, with $\mathbf{x} \in \mathbb{R}^{n_x}$ and $\mathbf{y} \in \mathbb{R}^{n_y}$. The input vector is generated as $\mathbf{x} \sim \mathcal{N}(\mathbf{0}, \mathbf{\Sigma}_x)$, where $\mathbf{\Sigma}_x$ is a randomly generated positive definite covariance matrix. In our experiments, we set $n_x = 100$ and $n_y = 70$. The output vector $\mathbf{y}$ is obtained according to the linear Gaussian model
\begin{equation}\label{eq:linear_dataset_model}
    \mathbf{y} = \sqrt{\rho}\,\mathbf{H}\mathbf{x} + \sqrt{1-\rho}\,\boldsymbol{\varepsilon},
\end{equation}
where $\rho \in [0,1]$ controls the correlation between $\mathbf{x}$ and $\mathbf{y}$, $\mathbf{H} \in \mathbb{R}^{n_y \times n_x}$ is a random matrix with orthonormal rows, and $\boldsymbol{\varepsilon} \sim \mathcal{N}(\mathbf{0}, \sigma^2 \mathbf{I}_{n_y})$ is an additive Gaussian noise. 
\end{itemize}

To ensure robustness of the experimental evaluation, the (normalized) MSE results obtained on the synthetic dataset are averaged over $100$ independent trials. Each trial consists of $N=\num{1e5}$ samples, generated according to the model in \eqref{eq:linear_dataset_model}, that are globally split into $N_{\mathrm{tr}} = 80{,}000$ training samples and $N_{\mathrm{test}} = 20{,}000$ test samples.

\begin{itemize}
    \item \textit{Human3.6M dataset} \cite{ionescu2014human36m}: we aim at estimating the three-dimensional positions of 17 human skeleton keypoints in the dataset, from their two-dimensional projections on the image plane, resulting in an input dimension of $n_x = 17 \times 2 = 34$ and an output dimension of $n_y = 17 \times 3=51$, corresponding to the 3D coordinates of the keypoints. The dataset is split in $N_{\mathrm{tr}} = 389{,}938$ training samples and $N_{\mathrm{test}} = 135{,}836$ test samples.
\end{itemize}

For the synthetic dataset, the covariance matrices are known, allowing the use of the L-MMSE estimator in \eqref{eq:lmmse} to obtain a reliable estimate of the target variable $\hat{\mathbf{y}}$. In contrast, for the Human3.6M dataset, the target variable is estimated via a least-squares (LS) estimator, which can be interpreted as a sample-based approximation of the L-MMSE solution \cite{kay1993statistical}. Specifically, the matrix $\mathbf{\Theta}$ in \eqref{eq:lmmse} is approximated by
\begin{equation}
    \hat{\mathbf{\Theta}} = \left(\mathbf{Z}_{\mathrm{q,tr}}^{{T}} \mathbf{Z}_{\mathrm{q,tr}}\right)^{-1} \mathbf{Z}_{\mathrm{q,tr}}^{{T}} \mathbf{Y}_{\mathrm{tr}},
\end{equation}
where the design matrix $\mathbf{Z}_{\mathrm{q,tr}} = [\mathbf{z}_{q,1}^{{T}}; \dots; \mathbf{z}_{q,N_{\mathrm{tr}}}^{{T}}] \in \mathbb{R}^{N_{\mathrm{tr}} \times n_z}$ collects the compressed representations of the training samples, while $\mathbf{Y}_{\mathrm{tr}} = [\mathbf{y}_{1}^{{T}}; \dots; \mathbf{y}_{N_{\mathrm{tr}}}^{{T}}] \in \mathbb{R}^{N_{\mathrm{tr}} \times n_y}$ contains the corresponding target outputs. 

\vspace{-12pt}
\subsection{Competitive Approaches}\label{sec:competitive_approaches}
The performance of the proposed quantization schemes are compared against the following baseline approaches:

\begin{itemize}
    \item \textit{\underline{Reverse water-filling rate allocation}:} 
    A rate allocation strategy based on reverse water-filling, which distributes bits across the components so as to minimize the MSE between the standard GIB representation $\mathbf{z}$ and its quantized version $\mathbf{z}_q$. This baseline corresponds to the classical optimal rate allocation strategy for Gaussian sources under an MSE distortion criterion \cite{cover1999elements}. Hence, this is a natural benchmark to assess the gains achievable by the task-aware structure induced by the GIB/MI representation.

    \item {\textit{\underline{Entropy-proportional rate allocation}}:}
    A heuristic allocation rule where, as envisioned in \cite{binucci2024opportunistic}, the bits assigned to each feature are proportional to its marginal entropy. This baseline reflects a simple yet commonly used strategy in practical compression systems, where components with larger variability are allocated more bits. As such, it serves as a lightweight data-driven alternative that does not explicitly exploit the task objective.

 \item {\textit{\underline{VMI-VQ}}:} the variational approach proposed in Section~\ref{sec:variational_vq_approaches} where the NN encoder and quantizer are jointly trained to maximize task-relevant information. This baseline constitutes a flexible, black-box method, capable of learning task-dependent representations directly from data, both in Gaussian and non-Gaussian settings, and provides a natural benchmark for comparison with analytically derived quantization strategies.
\end{itemize}

\vspace{-6pt}
\subsection{Additive Approximation of the Quantization Noise}\label{sec:gaussian_approx_results}
The goal of this section is to validate the additive noise approximation for the quantization error introduced in Section~\ref{sec:quantization}, considering both scalar and vector quantization schemes. For the scalar case, we employ an entropy-coded dithered quantizer (ECDQ) \footnote{ECDQ is implemented using subtractive uniform dithering generated from a deterministic seed shared by encoder and decoder. For each component, a dither signal \(u \sim \mathcal U(-\Delta/2,\Delta/2)\) is generated identically at both sides. The encoder computes the quantization index as \(k=\mathrm{round}((x+u)/\Delta)\), while the decoder reconstructs \(\hat{x}=\Delta k-u\). The quantization step \(\Delta\) is numerically selected for each component so as to achieve the distortion corresponding to the target rate according to~\eqref{eq:dr_gaussian}.}, in accordance with the additive noise model described in Section~\ref{sec:quant_scalar_noise}. For the vector case, we consider block-vector quantization based on lattice codebooks with maximum block size $N_{\textrm{F}}=4$, where quantization is performed using the $D_4$ lattice \cite{conway2013sphere}.

The experimental analysis is conducted on a synthetic multivariate Gaussian dataset, according to the model described in Section~\ref{sec:task_description}, with correlation coefficient $\rho=0.8$. The number of compressed features is fixed to $n_z=40$. We evaluate both the empirical and theoretical reconstruction (normalized) mean-squared error (NMSE) of the latent representation $\mathbf{z}$, as well as the corresponding MSE of the target variable $\mathbf{y}$.

Figure~\ref{fig:reconstruction_nmse_vs_rate_budget_scalar_vector} shows the reconstruction NMSE with respect to the relevance variable $\mathbf{z}$ for both scalar and vector quantization schemes. Dashed lines correspond to the theoretical rate--distortion lower bound under ideal quantization, whereas solid lines report the empirical NMSE achieved by the considered lattice quantizers, evaluated on the test set. As expected, the NMSE decreases with increasing rate for all schemes.
From Figure~\ref{fig:reconstruction_nmse_vs_rate_budget_scalar_vector}, it can be observed that VQ yields a non-negligible performance improvement compared to scalar quantization. This gain mostly depends on the fact that, even though the GIB-extracted features are independent, they are characterized by an anisotropic nature, e.g., a different variance, as discussed in Section~\ref{sec:gaussian_ib_section}.
Furthermore, the observed NMSE gain is also (slightly) due to the smaller shaping factor of the $D_4$ lattice, whose normalized second moment is $G(D_4)\approx 0.08512$, compared to the scalar case where $G_{\mathrm{scalar}} = 1/12$.
Finally, it is possible to observe that the NMSE gap, between theoretical predictions and empirical results, progressively diminishes for increasing rates, approaching the asymptotic limits $\Delta_s = 10\log_{10}\!\left(\frac{\pi e}{6}\right)\approx 1.53\,\mathrm{dB}$ and $\Delta_{D_4} = 10\log_{10}\!\big(G(D_4)(2\pi e)\big)\approx 1.16\,\mathrm{dB}$, for scalar and vector quantization, respectively \cite{gray2002quantization}.

\begin{figure}[ht]
    \centering
    \includegraphics[width=0.85\linewidth]{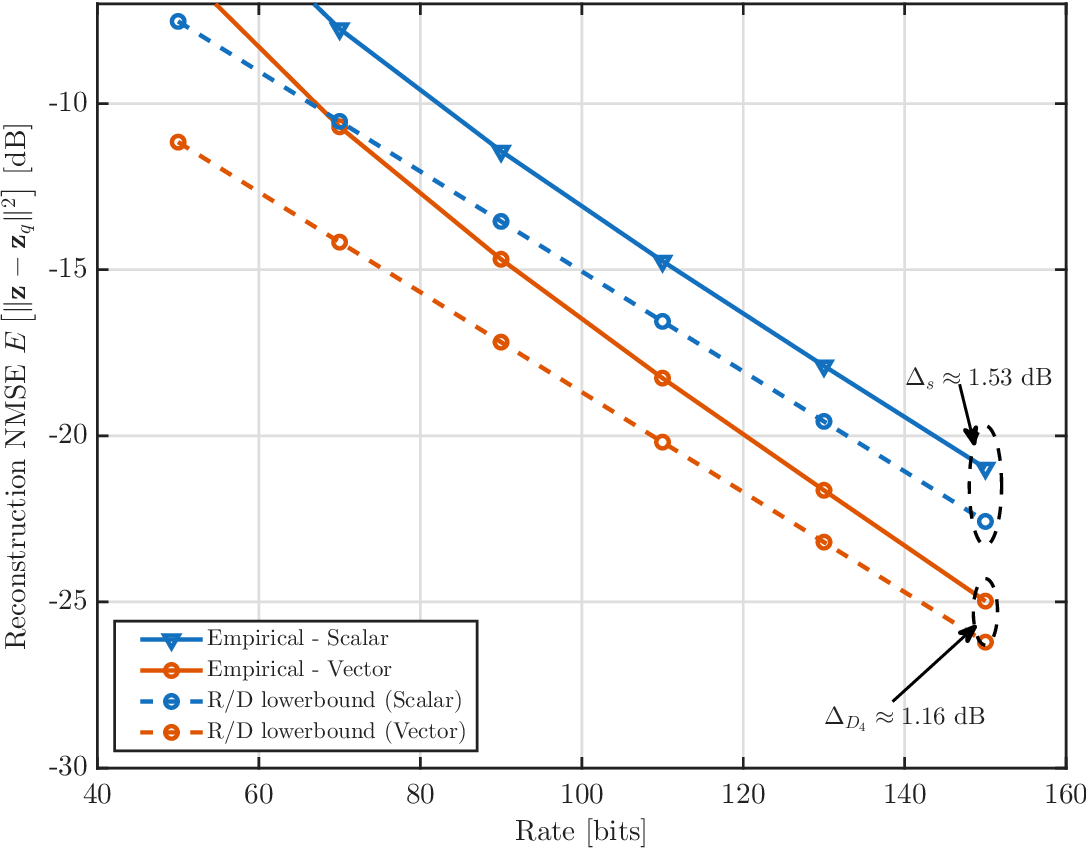}
    \caption{Reconstr. NMSE 
    for vector and scalar quantization.}
    \label{fig:reconstruction_nmse_vs_rate_budget_scalar_vector}
\end{figure}

\begin{figure}[ht]
    \centering
    \includegraphics[width=0.85\linewidth]{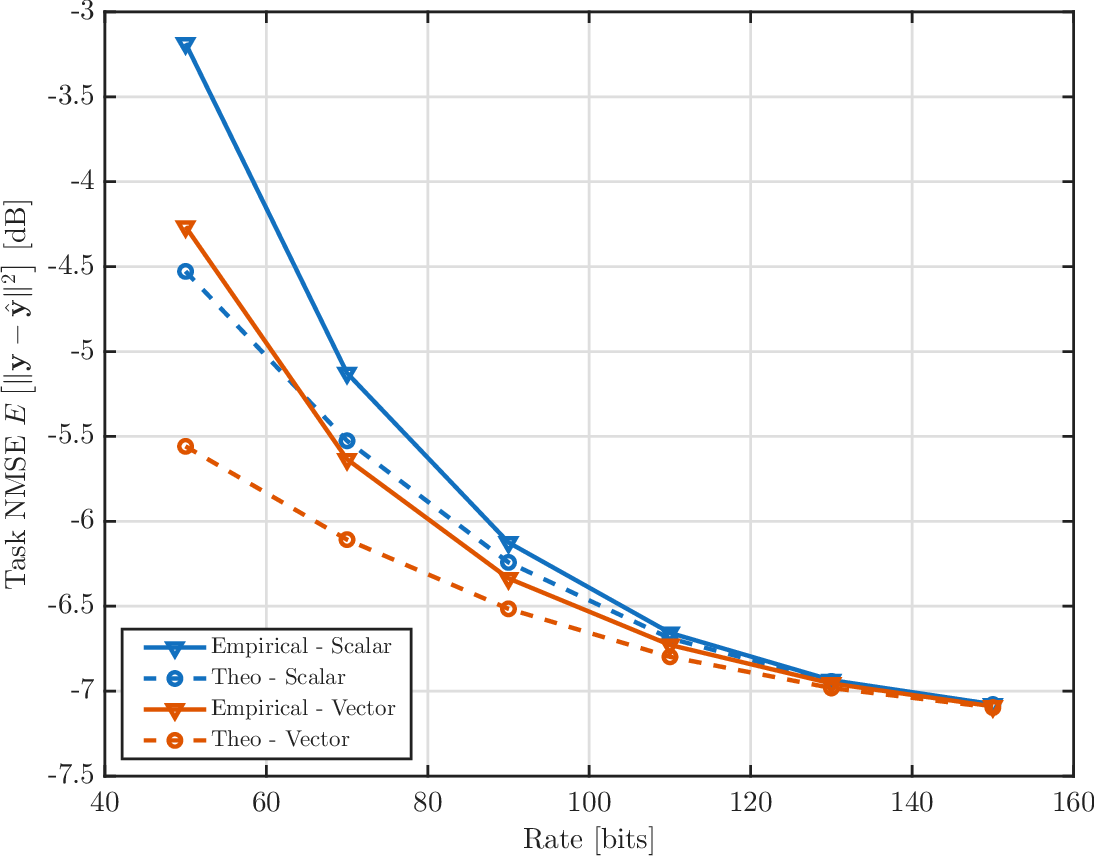}
    \caption{Task NMSE
    for vector and scalar quantization.}
\label{fig:task_nmse_vs_rate_budget_scalar_vector}
\end{figure}

Figure~\ref{fig:task_nmse_vs_rate_budget_scalar_vector} reports the task NMSE achieved by L-MMSE estimation of the target variable $\mathbf{y}$. As expected, the VQ is capable to turn the \emph{representation} NMSE advantage with respect to scalar quantization, in a clear improvement also in \emph{task} NMSE, especially in low-rate regimes.

\begin{figure}[ht]
    \centering
    \includegraphics[width=0.85\linewidth]{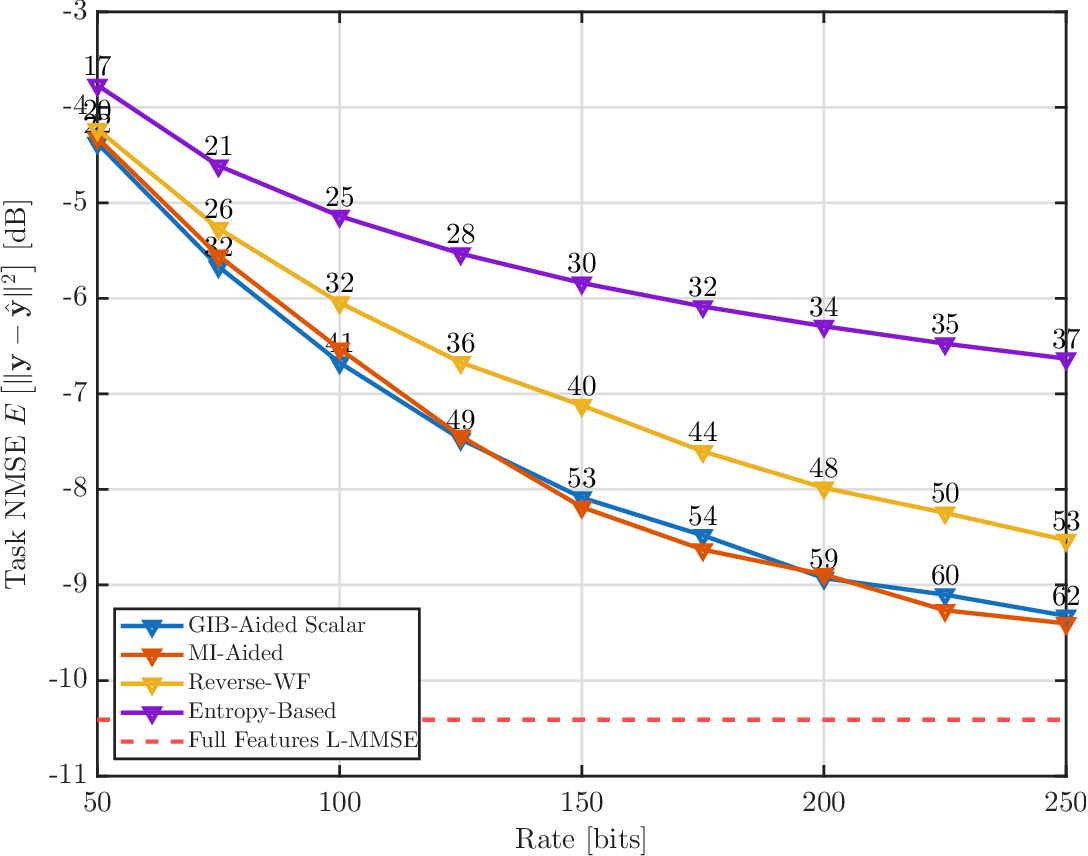}
    \caption{Task NMSE $\mathbb{E}\left[\|\mathbf{y}-\hat{\mathbf{y}}\|^2\right]$: numbers on the markers show how many features are effectively quantized ($r_i^*>0$). 
    }
    \label{fig:NMSE_versus_rate_budget}
\end{figure}

\vspace{-6pt}
\subsection{Separate Scalar Quantization (SQ)}

\begin{figure*}[ht]
    \centering
    \includegraphics[width=0.85\linewidth]{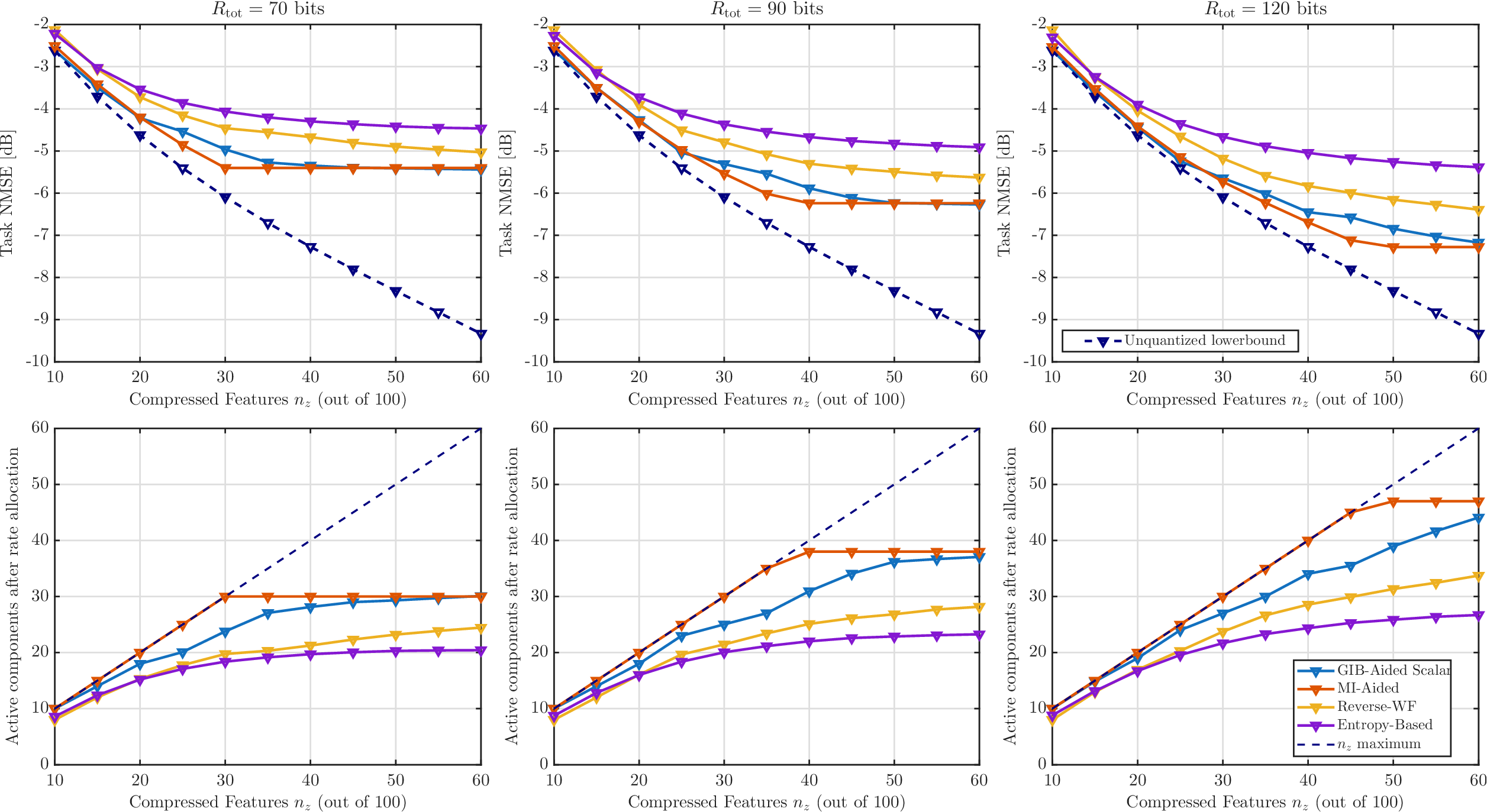}
    \caption{Task NMSE $\mathbb{E}\!\left[\|\mathbf{y}-\hat{\mathbf{y}}\|^2\right]$ (top row) and number of effectively selected components (bottom row) as a function of the number of compressed features $n_z$, under different rate budgets $R \in \{70,90,120\}$ bits.}
    \label{fig:nmse_versus_compressed_features}
\end{figure*}

This section assesses the performance of the proposed GIB/MI-aided quantization schemes both on synthetic and real datasets, comparing them with the baselines described in Section~\ref{sec:competitive_approaches}. In the first test, we consider a synthetic multivariate regression dataset generated according to the model introduced in Section~\ref{sec:task_description}, setting a  correlation coefficient $\rho = 0.8$.

Figure~\ref{fig:NMSE_versus_rate_budget} reports the average \emph{task} NMSE versus the total rate budget $R_{\text{tot}}$ for all the competing approaches, with the maximum number of compressed components set to $n_z = n_y = 70$. Specifically, (all) the features are quantized with the \emph{total} rate constraint $R_{\mathrm{tot}} \in \{50,75,100,125\dots,250\}$ bits, i.e., an average number of bits per component roughly $\in [ 0.71, 3.57]$. 
Under this setting, the number of components that are effectively quantized (shown by labels on each simulation marker) is determined by the solutions of the optimization algorithms (e.g., setting $r_i^*=0$ for the unused ones) that, this way, optimally exploit the available rate budget.

As expected, both the (full-feature) GIB-aided and the MI-aided quantization schemes consistently outperform the reverse-WF and entropy-based strategies, across all the considered rate budgets, activating more features with respect to the "variance/entropy" based quantization schemes. Moreover, the MI-aided scheme exhibits a performance nearly identical to that of the GIB-aided approach. This behavior can be interpreted by observing that fixing the number of components to $n_z = n_y$ is equivalent to maximize the hyperparameter $\beta$ in \eqref{eq:ib_prob}, thereby driving the GIB objective toward the maximization of the mutual information $I(\mathbf{z}, \mathbf{y})$. 

In addition, the performance gap between the GIB-aided method and reverse-WF widens as the available rate increases. This trend can be explained by the structure of the reverse-WF cost function in \eqref{eq:optimal_GIB_quantization}: as the rate budget grows, the reverse-WF solution increasingly emphasizes reconstruction accuracy, leading to a rate allocation that is suboptimal with respect to task-relevant components.

Figure~\ref{fig:nmse_versus_compressed_features} shows the empirical \emph{task} NMSE versus the \emph{nominal} number of compressed features, e.g., as a function of the increasing values $n_z<n_y$. Note that, while $n_z<n_y$ is fixed \emph{a priori} for the other schemes, this would not be the case for the MI formulation, where $n_z=n_y$ by design in \eqref{eq:MI-quantization}: however, for comparison fairness and sake of investigation, herein we exploited only the first $n_z<n_y$ quantized features, among the $n_y$ obtained by the optimal MI (quantized) solution.
The results are plot for three different rate budgets, namely $R_{\mathrm{tot}} \in \{70, 90, 120\}$ bits. Also in this case, the MI-aided strategy achieves the lowest NMSE, confirming that it's not recommended to fix a-priori the number $n_z$ of latent features.

\begin{figure}[ht]
    \centering
    \includegraphics[width=0.85\linewidth]{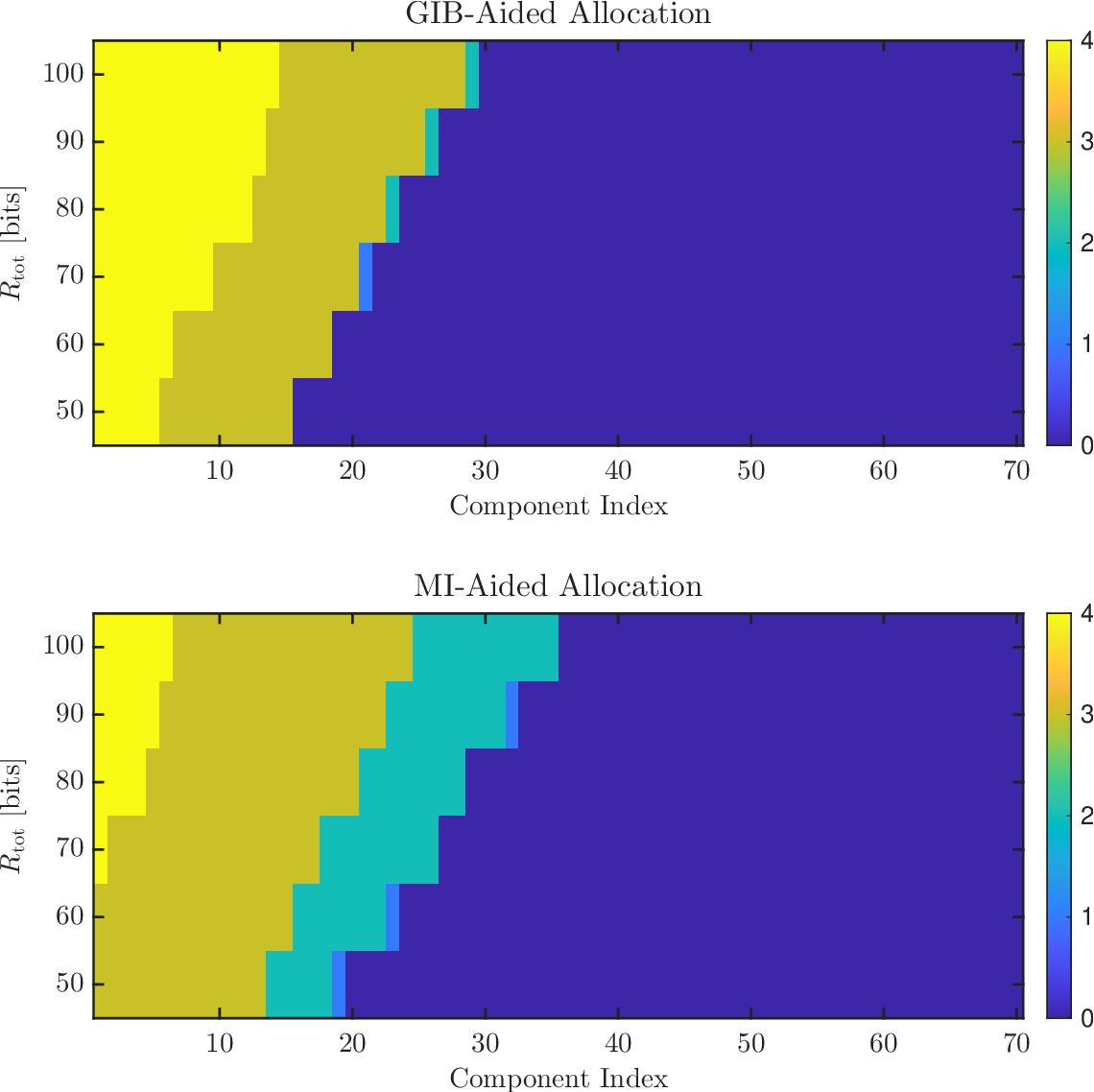}
    \caption{Rate distribution among different components for the GIB-Aided, and the MI-Aided strategies.}
    \label{fig:rate_allocation_heatmap}
\end{figure}

The bottom panel shows the average effective number of selected components for the three strategies as the \emph{nominal} number of compressed features $n_z$ increases. It can be observed that the MI-aided quantization scheme retains a larger number of active components (e.g., with $r_i^{\star}>0$) compared to the competing approaches, thereby achieving a more favorable trade-off between compression and task relevance.

To further support this observation, Figure~\ref{fig:rate_allocation_heatmap} illustrates the rate allocation across components obtained with both MI-aided and GIB-aided schemes. In particular, we consider a variable total rate budget $R_{\mathrm{tot}} \in \{50, 60, \dots, 100\}$ bits. For each budget value, the MI-based allocation is first used to determine the optimal number of active components $n_z^{*}$. Then, the GIB-aided allocation is tuned to the MI, by selecting the minimum value of $\beta$ that yields the same number of active components $n_z^{*}$.
The performance gap can be attributed to the different allocation profiles. The MI-aided strategy selects a larger set of components and distributes the available rate more evenly among them, as illustrated in Figure~\ref{fig:rate_allocation_heatmap}. In contrast, the GIB-aided scheme tends to concentrate the rate on a smaller subset of components, effectively over-quantizing them while neglecting others, which ultimately results in degraded (task) inference performance.

\vspace{-6pt}
\subsection{Vector Quantization (VQ)}
\begin{figure}[ht]
    \centering
    \includegraphics[width=0.85\linewidth]{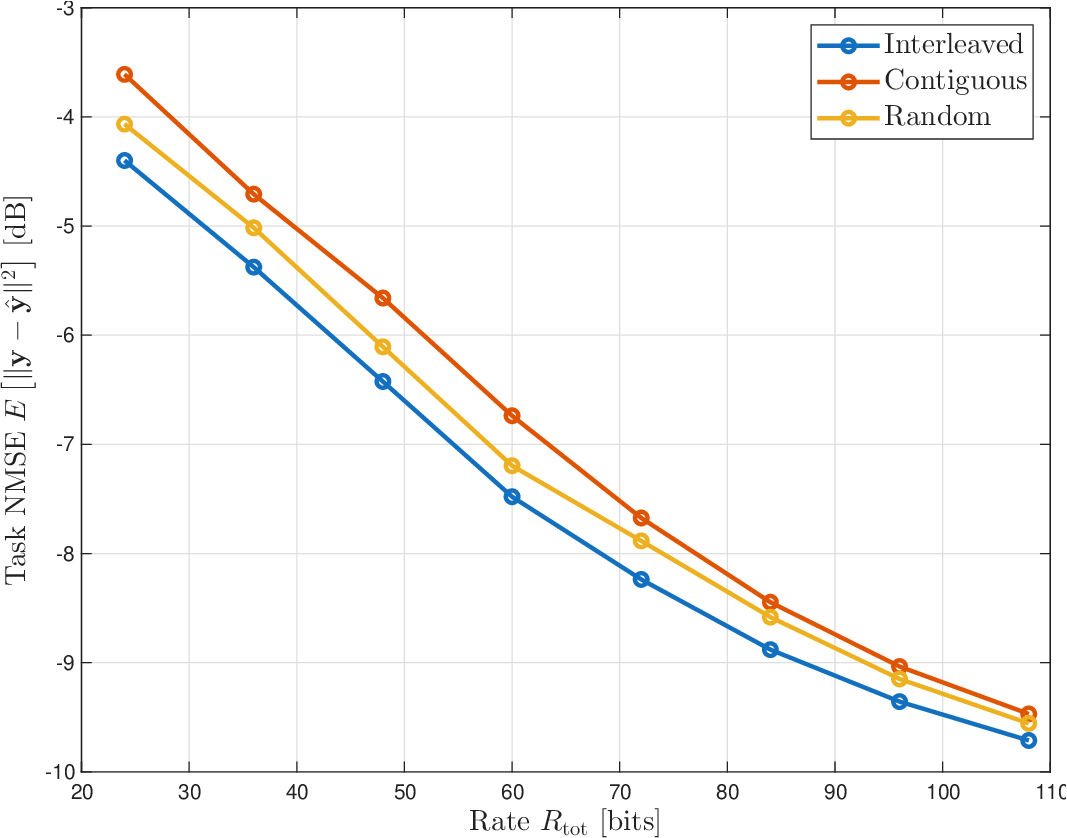}
    \caption{Task NMSE $\mathbb{E}\left[\|\mathbf{y}-\hat{\mathbf{y}}\|\right]$ vs. rate for the the interleaved, the contiguous, and a random feature permutation.
    }

\label{fig:task_nmse_vs_rate_budget_permutation}
\end{figure}
\begin{figure}[ht]
    \centering
    \includegraphics[width=0.85\linewidth]{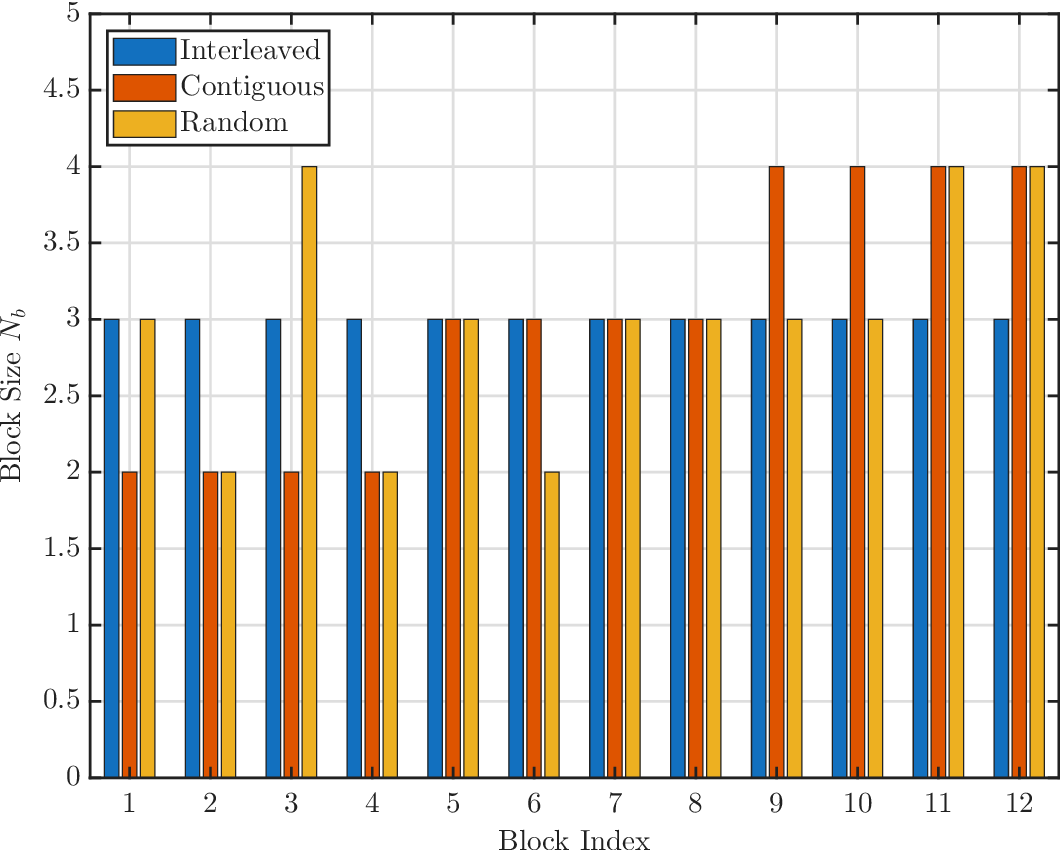}
    \caption{Average block size for the three feature permutation schemes ($R_{\textrm{tot}}=108$ bits).}
    \label{fig:block_size_versus_permutation}
\end{figure}

\begin{figure*}[t]
    \centering    \includegraphics[width=0.75\linewidth]{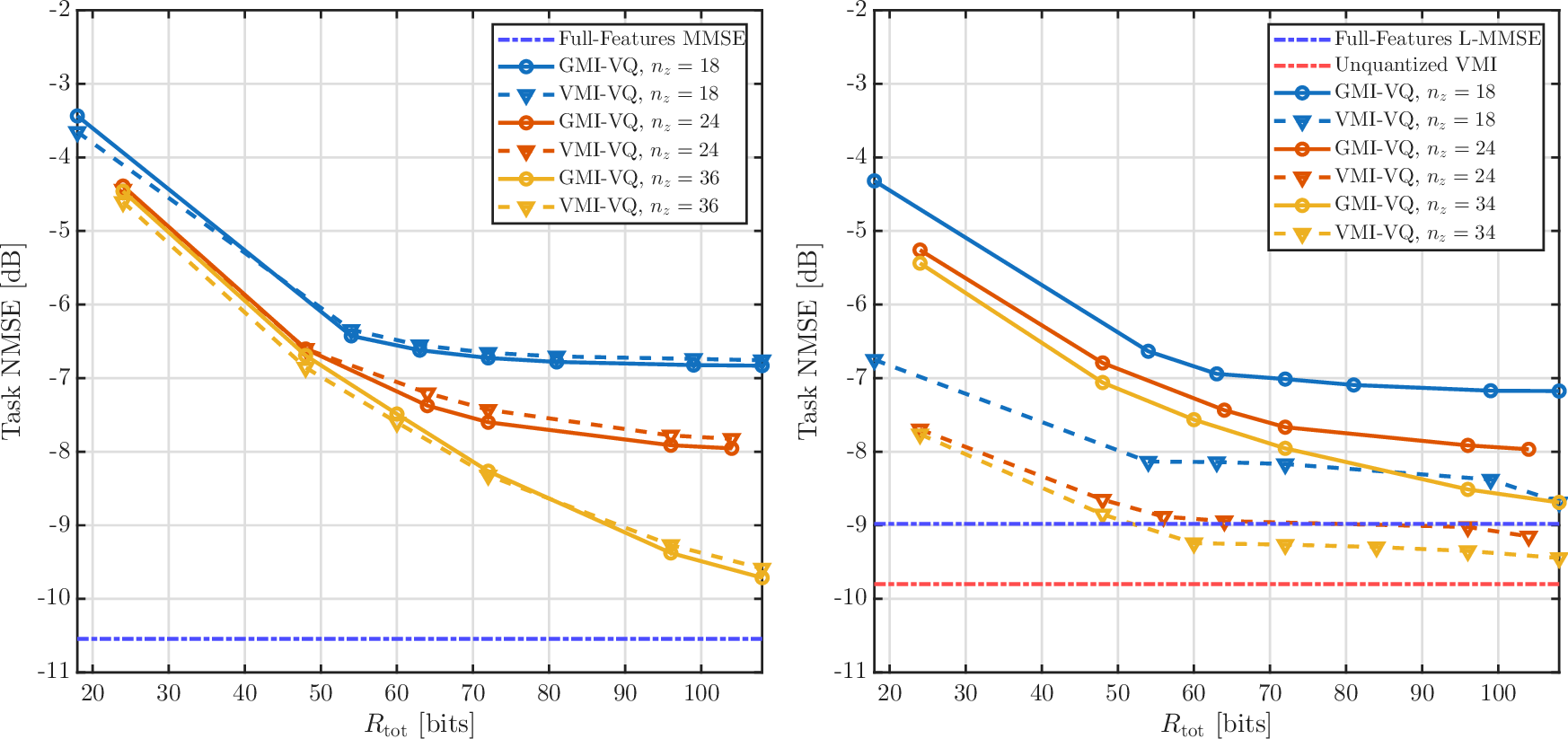}
    \caption{Left panel: Task NMSE as a function of the rate budget for GMI-VQ, and VMI-VQ evaluated on a Gaussian synthetic dataset. Right panel: Corresponding simulation results on the h36m dataset.}
    \label{fig:vq_vae_comparisons}
\end{figure*}

This section evaluates performance of the MI-aided VQ schemes on synthetic and real-world datasets. We compare the Gaussian Mutual Information VQ (GMI-VQ) in \eqref{eq:block_vq_gib_final} with the Variational Mutual Information VQ (VMI-VQ) in \eqref{eq:vq_vae_loss}, as modified by \eqref{eq:VMI-VQ-task_oriented_loss}.
Experiments are conducted on a synthetic multivariate Gaussian dataset generated according to \eqref{eq:linear_dataset_model}, with $n_x=50$ and $n_y=36$, and on the Human3.6M dataset.

For fairness, the VMI-VQ encoder is designed with a complexity comparable to that of the linear GMI-VQ encoder. Specifically, we use a shallow three-layer neural network, with ReLU activations that maps $\mathbf{x}\in\mathbb{R}^{n_x}$ to a latent representation composed of $B$ tokens, each of dimension $N_{\textrm{F}}$, so that $n_z=BN_{\textrm{F}}$. Each token is quantized by nearest-neighbor assignment using a shared codebook of size $K$, yielding a fixed-rate representation with total rate $R_{\mathrm{tot}}=B\log_2 K$.

For both datasets, we evaluate two compressed latent dimensions, $n_z\in\{18,24\}$, as well as the full-feature representation with $n_z=\min(n_x,n_y)$. This corresponds to $n_z=36$ for the synthetic Gaussian dataset and $n_z=34$ for Human3.6M. For VMI-VQ, we set $N_{\textrm{F}}=3$ when $n_z\in\{24,36\}$ and $N_{\textrm{F}}=2$ when $n_z=18$, in order to keep the shared codebook size moderate and reduce training complexity. The VMI-VQ models are trained as described in Section~\ref{sec:variational_vq_approaches}, replacing the conventional reconstruction loss with the task-oriented log-likelihood objective in \eqref{eq:VMI-VQ-task_oriented_loss} for $5{,}000$ epochs, with a batch size $2048$, and setting a  rate $l_r=10^{-2}$.

For GMI-VQ, we use $B \in \{9, 8, 12\}$ feature blocks and vary the per-block rate as $R_0\in\{2,3,\ldots,9\}$ bits/block, resulting in a total rate $R_{\textrm{tot}}=BR_0 \in \{18,\dots,108\}$ bits. For each feature block, a VQ is trained on the training set, using a single shared codebook across blocks.\footref{sharedNote} 

First, we study the effect of feature-permutation on the task performance. We use a synthetic multivariate Gaussian dataset generated as in Section~\ref{sec:task_description}, with $n_x=50$ and $n_y=36$, using $N_{\textrm{tr}}=80{,}000$ training samples and $N_{\textrm{test}}=20{,}000$ test samples. We compare three permutations: the contiguous ordering induced by the GIB, the interleaved scheme introduced in Section~\ref{sec:block_vector_vq}, and a random assignment of features across blocks. The quantizer operates on the full feature representation (i.e., $n_z=n_y=36$).

Figure~\ref{fig:task_nmse_vs_rate_budget_permutation} shows the average NMSE on the test-set versus $R_{\textrm{tot}}$. The interleaved permutation provides the best performance by forming blocks with more homogeneous distribution of the feature variances, thereby mitigating the loss induced by using a shared VQ codebook across all blocks. Conversely, sorting features by decreasing variance and applying a contiguous ordering creates high unbalance among blocks, leading to a noticeable performance degradation. Consistently with this interpretation, the random permutation lies between these two extremes: by mixing features, it partially homogenizes the blocks and therefore achieves intermediate performance.

Figure \ref{fig:block_size_versus_permutation} reports the block sizes obtained by the three schemes for $R_{\textrm{tot}}=108$ bits. As expected, the contiguous permutation allocates smaller blocks to the most task-relevant components and larger blocks to less relevant ones, since smaller blocks reduce the average quantization distortion with \eqref{eq:HR-distortion}. In contrast, random and interleaved permutations spread feature variances more evenly across blocks, yielding more homogeneous block sizes.

Figure~\ref{fig:vq_vae_comparisons} compares VMI-VQ and GMI-VQ across compression levels and rate budgets. On the synthetic Gaussian dataset, VMI-VQ attains performance comparable to GMI-VQ and slightly outperforms it at low rates, where the high-rate quantization-noise approximation becomes inaccurate. In contrast, at higher rates, GMI-VQ benefits from the closed-form GIB mapping, which maximizes the mutual information with $\mathbf{y}$, whereas VMI-VQ relies on more complex learned models that may yield coarser approximations of the optimal compression map. For each rate budget $R_{\textrm{tot}}$, performance improves with $n_z$, approaching the full-features unquantized L-MMSE estimator, which is MMSE-optimal under multivariate Gaussian assumptions~\cite{kay1993statistical}. This indicates that, in rate-limited regimes, retaining the full feature dimension before quantization is preferable to the dimensionality reduction typically adopted in conventional IB-based methods.

Conversely, on Human3.6M, VMI-VQ significantly outperforms GMI-VQ. This gap is mainly due to the mismatch between the Gaussian assumptions of the linear GMI-VQ framework and the nonlinear structure of the inference task. This interpretation is supported by the fact that, on the full feature set, the unquantized L-MMSE estimator performs worse than the unquantized encoder--decoder architecture. In non-Gaussian regimes, nonlinear embeddings learned by neural encoders therefore provide a clear advantage over linear feature extraction.

\section{Conclusion and Future Work}\label{sec:conclusion}
This paper proposes optimal designs for scalar and vector quantization on latent representations extracted via the Gaussian Information Bottleneck (GIB). Simulation results confirm that, under finite rate budget constraints, the quantization design should directly target mutual information rather than original IB optimization.
These findings can be extended beyond the Gaussian setting by leveraging deep learning approaches based on variational quantized encoders, which optimize a tractable lower bound on the mutual information.
Future research directions include the analysis of a wider class of inference tasks, as well as the integration of task-oriented quantization strategies into edge inference systems, jointly accounting for digital modulation and channel coding under practical communication constraints.

\begin{appendices}
\section{Derivation of the Scalar Quantization Problem}~\label{sec:appendix_scalar_quantization}
We observe that the linear GIB mapping in \eqref{eq:gib_mapping}-\eqref{eq:GIB-CompressionMatrix}     emerged as the solution of the IB problem under Gaussian assumptions. Conversely, in this case we are imposing the linear structure for $\mathbf{z}_q$ in \eqref{eq:noisy_ib_mapping} according to optimal quantization design arguments, in high rate regimes. Thus, the cost function in problem \eqref{eq:gib_quantized_problem}
has exactly the same structure of Eq.~(6) in~\cite{chechik2003Gaussian}, and consequently, the compression matrix $\tilde{\mathbf{A}}_{\beta}$ is necessarily equivalent to the optimal one for the GIB, i.e., constructed from the eigenpairs $\{\lambda_i,\mathbf{v}_i\}_{i=1}^{n_x}$ of the CCA matrix $\mathbf{\Sigma}_{\mathbf{x}|\mathbf{y}}\mathbf{\Sigma}_{\mathbf{x}}^{-1}$. However, in this case the corresponding loadings $\{\alpha_i\}$ have to be selected as a function of the per-component rate $r_i$, i.e., of the distortion $D_i$. Thus, employing the closed form expressions for the mutual information under jointly Gaussian models, exploiting the same derivation that lead to Eq.~(26) in~\cite{chechik2003Gaussian}, problem~\eqref{eq:go_gib_vq} can be equivalently rewritten as
\begin{equation}\label{eq:reduced_quant_prob}
\begin{aligned}
\min_{\{s_i,D_i\}_{i=1}^{n_z}}\quad
& \frac{1}{2}\sum_{i=1}^{n_z}\log_2\!\left(\frac{s_i+D_i}{D_i}\right)
-\beta_{n_z}^{c}\log_2\!\left(\frac{s_i+D_i}{D_i+\lambda_i s_i}\right) \\
\text{s.t.}\quad
& \text{(a)}\ \sum_{i=1}^{n_z} r_i\!\left(\frac{s_i}{D_i}\right)\leq R_{\mathrm{tot}},\\
& \text{(b)}\ 0< D_i \leq s_i, \quad \forall i=1,\dots,n_z,
\end{aligned}
\end{equation}
where we defined $s_i \triangleq \alpha_i^2$. 
Interestingly, both the objective function and the constraints in~\eqref{eq:reduced_quant_prob} depend on $(s_i,D_i)$ only through the per-component SNR $\rho_i \triangleq s_i/D_i$, that substituted into~\eqref{eq:reduced_quant_prob} lead to~\eqref{eq:go_gib_vq}.

\section{Convexity Analysis of the Scalar Quantization Problem}\label{sec:convexity_analysis}
The rate-allocation problem formulated in Section~\ref{sec:separate_scheme} is nonconvex. In particular, the objective function is the sum of quasi-convex (yet nonconvex) terms, and thus it is not even quasi-convex. Moreover, the maximum-rate constraint in~(\ref{eq:go_gib_vq}-a) is not convex, since it can be expressed as a sublevel set of concave functions. Nevertheless, by means of a simple change of variables, the problem can be recast into a convex program, which guarantees existence and uniqueness of the optimal solution.

Define $\delta_i \triangleq \ln(\rho_i)$, i.e., $\rho_i = 2^{\delta_i}$. The problem can be rewritten as
\begin{equation}\label{eq:convexified_problem}
\begin{split}
    \min_{\{\delta_i\}_{i=1}^{n_z}}\hspace{10pt}
    &
    \sum_{i=1}^{n_z}
    \log_2\!\bigl(1+2^{\delta_i}\bigr)
    +\frac{\beta^{c}_{n_z}}{1-\beta^{c}_{n_z}}
    \sum_{i=1}^{n_z}\log_2\!\bigl(1+\lambda_i 2^{\delta_i}\bigr)\\
    \text{s.t.}\hspace{10pt}
    &(a)\;\sum_{i=1}^{n_z}\delta_i \leq R_{\mathrm{tot}}\\
    &(b)\;0\leq \delta_i \leq \log_2(\rho_i^{\max}), \hspace{10pt} \forall i=1,\dots,n_z.
\end{split}
\end{equation}
Under this transformation, constraint~(a) becomes affine (and hence convex). Furthermore, the second-order derivative of the objective function $f(\delta_i)$ in \eqref{eq:convexified_problem} is given by
\begin{equation}
    \frac{\partial^2 f(\delta_i)}{\partial \delta_i^2}
    =\ln(2)\bigg[\frac{
    2^{\delta_i}}{(1+2^{\delta_i})^2}
    -\frac{
    \beta^{c}_{n_z}}{\beta^{c}_{n_z}-1} \frac{\lambda_i^2 2^{\delta_i}}{\bigl(1+\lambda_i 2^{\delta_i}\bigr)^2}\bigg].
\end{equation}
Recalling that $\rho_i=2^{\delta_i}$, a sufficient condition for convexity is
\begin{equation}
\left|
   \frac{(1+\rho_i)}{(1+\lambda_i\rho_i)} \right| \leq \frac{1}{\lambda_i}\sqrt{\frac{\beta^{c}_{n_z}-1}{\beta^{c}_{n_z}}}.
\end{equation}
Since $\rho_i \geq 1$ and $\lambda_i \geq 0$, the absolute value can be dropped. Solving the above inequality with respect to $\rho_i$, we obtain the following upper bound on constraint~(\ref{eq:go_gib_vq}-b)
\begin{equation}
\rho_i\leq\rho^{\text{(max)}}_{i}=\frac{\sqrt{\frac{\beta^{c}_{n_z}-1}{\beta^{c}_{n_z}}}-\lambda_i}{\lambda_i\left(1-\sqrt{\frac{\beta^{c}_{n_z}-1}{\beta^{c}_{n_z}}}\right)}.
\end{equation}
It is easy to verify that, for all active modes, the stationary point of the objective function in \eqref{eq:snr_allocation_rule} always satisfies this condition. Therefore, the objective is convex in the relevant region, and the corresponding stationary point is the unique global minimizer.

\section{Proof of Lemma \eqref{lem:monotonicity_lemma}\label{sec:monotonicity_proof}}
\begin{proof}
    Fix a component $i$ and consider two multipliers $\eta_1,\eta_2$ such that $0<\eta_1\le \eta_2$. Let $\rho_1\triangleq\rho_i^{*}(\eta_1)$, $\rho_2\triangleq\rho_i^{*}(\eta_2)$. Where, by definition
    \begin{equation}
        \rho_i(\eta)^{*}=\arg\min_{\rho_i}f(\rho_i)+\eta r_i(\rho_i),
    \end{equation}
    this implies that
    \begin{equation}
        \begin{split}
            f_i(\rho_1)+\eta_1r_i(\rho_1)&\leq f_i(\rho_2)+\eta_1 r_i(\rho_2)\\
            f_i(\rho_2)+\eta_2r_i(\rho_2)&\leq f_i(\rho_1)+\eta_2r_i(\rho_1)\\
        \end{split}
    \end{equation}
    Summing the two inequalities and rearranging terms we obtain
    \begin{equation}
    \begin{split}
        (\eta_1-\eta_2)(r_i(\rho_1)-r_i(\rho_2))&\leq0 \hspace{6pt} \iff r_i(\rho_1) \geq r_i(\rho_2),
    \end{split}
    \end{equation}
    where in the last inequality we exploit the fact that $\eta_1-\eta_2 \leq 0$. Recalling that the rate function is $r_i=\log_2(\rho_i)$, and since the logarithm is monotonically increasing, last equality implies that $\rho_1 \geq \rho_2$, thus proving the lemma.  

\end{proof}

\section{Derivations for the Mutual Information Aware Quantization Scheme}\label{sec:mi_maximization_derivations}
Under the Gaussian approximation of the quantization noise, i.e., 
$\boldsymbol{\eta}_q \sim \mathcal{N}(\mathbf{0},\mathbf{\Sigma}_{\boldsymbol{\eta}_q}),
\quad
\mathbf{\Sigma}_{\boldsymbol{\eta}_q}=\mathrm{diag}(D_1,\dots,D_{n_z})$, maximizing the mutual information $I(\mathbf{z}_q,\mathbf{y})$ leads to a linear mapping \cite[p. 124]{globerson2005minimum}
\begin{equation}
\mathbf{z}_q=\mathbf{W}\mathbf{x} + \boldsymbol{\eta}_q.
\end{equation}
Using the entropy decomposition of mutual information, the optimization problem can be expressed as
\begin{equation}
\begin{split}
\min_{\mathbf{W},\{r_i\}}
&\;\frac{1}{2}\log_2\left(\frac{|\mathbf{W}\mathbf{\Sigma}_{\mathbf{x}|\mathbf{y}}\mathbf{W}^T+\mathbf{\Sigma}_{\boldsymbol{\eta}_q}|}{|\mathbf{W}\mathbf{\Sigma}_x\mathbf{W}^T+\mathbf{\Sigma}_{\boldsymbol{\eta}_q}|}\right)\\
&\text{s.t.} \hspace{6pt} \text{(a)}\;\sum_{i=1}^{n_y}r_i\leq R_{\mathrm{tot}}, \hspace{6pt} \text{(b)}\; r_i\geq 0, \hspace{6pt} \forall i.
\end{split}
\end{equation}
Taking derivatives with respect to $\mathbf{W}$ yields a generalized eigenvalue structure involving $\mathbf{\Sigma}_{\mathbf{x}|\mathbf{y}}\mathbf{\Sigma}_x^{-1}$, implying that the optimal transformation aligns with its eigenvectors. By parameterizing $\mathbf{W}=\mathbf{B}\mathbf{V}$, where $\mathbf{V}$ is the matrix collecting the left-eigenvector of CCA, and $\mathbf{B}=\mathrm{diag}(b_1,\dots,b_{n_y})$ the problem decouples across components. Defining
$s_i = p_i b_i^2$, and $\rho_i = \frac{s_i}{D_i}$ the problem becomes
\begin{equation}
\begin{split}
\min_{\{\rho_i\}_{i=1}^{n_y}}&\sum_{i=1}^{n_y}
\frac{1}{2}\log_2\left(\frac{1+\lambda_i\rho_i}{1+\rho_i}\right)\\
&\text{s.t.} \hspace{6pt} \text{(a)}\;\frac{1}{2}\sum_{i=1}^{n_y}\log_2(\rho_i)\leq R_{\mathrm{tot}}, \hspace{6pt} \text{(b)} \; \rho_i\geq 1 \hspace{6pt} \forall i.
\end{split}
\end{equation}
Taking the logarithm of the optimization variables as in \ref{sec:convexity_analysis}, it can be shown that the problem admits a unique global optimum. 

Then, applying the first order KKT condition, we end-up with the following equation 
\begin{equation}
-\frac{1}{1+\rho_i}
+\frac{\lambda_i}{1+\lambda_i\rho_i}
+\frac{\eta}{\rho_i}
=0,
\end{equation}
which leads to a quadratic equation in $\rho_i$. The optimal solution is obtained by selecting the largest feasible root $\rho_i^\star$, which maximizes the mutual information, i.e.
 \begin{equation}
\rho_i=\left[\frac{-C_i(\eta)+\sqrt{C_i^2(\eta)-4\eta\lambda_i}}{2\eta\lambda_i}\right]_{1}^{\infty},
\end{equation}
where $C_i(\eta)
\triangleq
-\frac{
\lambda_i-1+\eta+\eta\lambda_i
}{
2\eta\lambda_i
}$. The Lagrange multiplier $\eta^\star$ is chosen such that
\begin{equation}
\sum_i r_i(\rho_i^\star)=R_{\mathrm{tot}},
\end{equation}
while inactive components correspond to $\rho_i^\star=1$.

\section{Algorithmic Solution of Problem \ref{eq:block_vq_gib_final}}\label{app:dp_partition}

\begin{algorithm}[ht]
\caption{Dynamic-Programming Solution of Problem~\eqref{eq:block_vq_gib_final}}
\label{alg:dp_partition}
\begin{algorithmic}[1]
\REQUIRE Permuted feature statistics
\(\{s_i^\pi,\lambda_i^\pi\}_{i=1}^{n_y}\), number of blocks \(B\),
maximum block size \(N_{\max}\), block rate \(R_0\), and permutation \(\pi\).
\ENSURE Optimal assignment variables
\(\{\tau_{ib}^\star\}\) and \(\{\alpha_{fb}^\star\}\).
\vspace{0.5mm}
\STATE \textbf{Step 1: Dynamic-Programming Recursion}
\STATE \(J^\star(k,i)=+\infty\), for \(i=1,\ldots,n_y\), and \(k=1,\ldots,K\).
\FOR{\(k=1,\ldots,B\)}
    \FOR{\(i=1,\ldots,n_y\)}
        \FOR{each \(j\in\mathcal{J}(k,i)\)}
            \STATE Compute 
            \(C(j+1,i,R_0)\) using \eqref{eq:block_cost_dp}.
            \STATE $\Gamma = J^\star(k-1,j)+C(j+1,i,R_0)$.
            \IF{\(\Gamma<J^\star(k,i)\)}
                \STATE Set \(J^\star(k,i)=\Gamma\).
                \STATE Set \(P^\star(k,i)=j\).
            \ENDIF
        \ENDFOR
    \ENDFOR
\ENDFOR
\vspace{0.5mm}
\STATE \textbf{Step 2: Boundary Backtracking}
\STATE Set \(c_B^\star=n_y\).
\FOR{\(b=B,\ldots,2\)}
    \STATE Set \(c_{b-1}^\star=P^\star(b,c_b^\star)\).
\ENDFOR
\vspace{0.5mm}
\STATE \textbf{Step 3: Assignment Recovery}
\STATE Set \(c_0^\star=0\).
\FOR{\(b=1,\ldots,B\)}
    \FOR{\(i=1,\ldots,n_y\)}
        \STATE $\tau_{ib}^\star=
        \mathbf{1}\{c_{b-1}^\star<i\le c_b^\star\}$.
    \ENDFOR
\ENDFOR
\FOR{\(f=1,\ldots,n_y\)}
    \FOR{\(b=1,\ldots,B\)}
        \STATE Set
        $\alpha_{fb}^\star=\tau_{\pi^{-1}(f),b}^\star$
    \ENDFOR
\ENDFOR
\RETURN \(\{\tau_{ib}^\star\}\), \(\{\alpha_{fb}^\star\}\).
\end{algorithmic}
\end{algorithm}

In this section to solve the feature-to-block assignment problem in the block-vector quantization allocation, presented Section \ref{sec:block_vector_vq}.
The proposed procedure follows standard dynamic-programming arguments and is
summarized in Algorithm~\ref{alg:dp_partition}.

Let $\mathcal{B}_{\pi}(j,i)=\{j+1,\ldots,i\}$
denote a contiguous block in the permuted feature domain. We define
$J^{\star}(k,i)$ as the minimum value of the objective of Problem
\eqref{eq:block_vq_gib_final} obtained by partitioning the first $i$ permuted
features into exactly $k$ contiguous blocks. Furthermore, we denote by $P^{\star}(k,i)$ the optimal previous cut position
associated with pair $(k,i)$. That is, if $P^{\star}(k,i)=j$, then the first
$j$ permuted features are partitioned into $k-1$ contiguous blocks, while the
last block is given by $\mathcal{B}_{\pi}(j,i)=\{j+1,\ldots,i\}$.

The cost of assigning
the candidate block $\mathcal{B}_{\pi}(j,i)$ to a single block is defined as
\begin{equation}
\label{eq:block_cost_dp}
C\!\left(j+1,i,R_0\right)=-\frac{1}{2}\sum_{\ell=j+1}^{i}\log_2\left(\frac{1+\rho_{\ell}^{(j,i)}}{1+\lambda_{\ell}^{\pi}\rho_{\ell}^{(j,i)}}\right),
\end{equation}
where $\rho_{\ell}^{(j,i)}=\frac{s_{\ell}^{\pi}}{D_{j:i}(R_0)}$ is the SDR associated with feature $\ell$ when the features
$\{j+1,\ldots,i\}$ are grouped into the same block, with $D_{j:i}(R_0)$ denoting
the distortion induced by assigning rate $R_0$ to the candidate block
$\mathcal{B}_{\pi}(j,i)$.

The optimal solution of the feature-partitioning problem exhibits a recursive
structure. Specifically, for each feature index $i$ and each number of blocks
$k$, the optimal cost of partitioning the first $i$ permuted features into
$k$ blocks is obtained by selecting a previous cut position $j$, partitioning
the first $j$ features into $k-1$ blocks, and assigning the remaining features
$\{j+1,\ldots,i\}$ to the last block $\mathcal{B}_{\pi}(j,i)$. Therefore, the
optimal value and the associated optimal cut position satisfy
\begin{equation}
\label{eq:dp_recursion_gib}
\left\{
\begin{aligned}
J^{\star}(k,i)
&=
\min_{j \in \mathcal{J}(k,i)}
\left\{
J^{\star}(k-1,j)+C\!\left(j+1,i,R_0\right)
\right\},\\
P^{\star}(k,i)
&=
\arg\min_{j \in \mathcal{J}(k,i)}
\left\{
J^{\star}(k-1,j)+C\!\left(j+1,i,R_0\right)
\right\}.
\end{aligned}
\right.
\end{equation}

The limits on the block size induced by constraint $(b)$ in
\eqref{eq:block_vq_gib_final} force the feasible cut positions to lie in the set 
\begin{equation}
\label{eq:feasible_cut_set}
\mathcal{J}(k,i)
\triangleq
\left\{
j \in \mathbb{Z}:
\begin{array}{l}
\max\{k-1,\, i-N_{\max}\} \le j,\\
j \le \min\{(k-1)N_{\max},\, i-1\}
\end{array}
\right\}.
\end{equation}

According to standard dynamic programming, recursion
\eqref{eq:dp_recursion_gib} can be computed by storing, for each number of
blocks $k=1,\ldots,B$ and each prefix length $i=1,\ldots,n_y$, both the
optimal cost $J^{\star}(k,i)$ and the corresponding optimal cut position
$P^{\star}(k,i)$, as detailed at lines 3--12 of Algorithm \ref{alg:dp_partition}.   

Once $J^{\star}(k,i)$ and $P^{\star}(k,i)$ have
been computed for all numbers of blocks and prefix lengths, the optimal feature
partition can be recovered by backtracking through the stored cut positions (see lines 20--26 of Algorithm \ref{alg:dp_partition}).
Starting from the terminal boundary $c_B^{\star}=n_y$, the optimal boundaries
are recursively obtained as
\begin{equation}
\label{eq:optimal_boundaries}
c_{b-1}^{\star}
=
P^{\star}\!\left(b,c_b^{\star}\right),
\qquad
b=B,B-1,\ldots,1.
\end{equation}
Hence, the $b$-th optimal block in the permuted feature domain is $\mathcal{B}_{b}^{\pi,\star}=\left\{c_{b-1}^{\star}+1,c_{b-1}^{\star}+2,\ldots,c_b^{\star}\right\}$. The corresponding binary assignment variable is therefore given by
\begin{equation}
\label{eq:tau_computation}
\tau_{ib}
=
\mathbbm{1}
\left\{
c_{b-1}^{\star} < i \le c_b^{\star}
\right\},
\qquad
i=1,\ldots,n_y,\quad b=1,\ldots,B.
\end{equation}

Finally, the assignment can be mapped back to the original feature domain using
the permutation $\pi$. If $f=\pi(i)$ denotes the original index of the $i$-th
permuted feature, then
\begin{equation}
\label{eq:tau_original_domain}
\alpha_{fb}
=
\tau_{\pi^{-1}(f),b},
\qquad
f=1,\ldots,n_y,\quad b=1,\ldots,B,
\end{equation}
where $\alpha_{fb}=1$ if original feature $f$ is assigned to block $b$, and
$\alpha_{fb}=0$ otherwise.

Since \eqref{eq:dp_recursion_gib} requires searching over at
most $N_{\max}$ feasible previous cut positions for each possible block size and prefix length, the computational
complexity of Algorithm \ref{alg:dp_partition}  is $\mathcal{O}(BN_{\max}n_y)$.
\end{appendices}

\bibliographystyle{IEEEtran}
\bibliography{IEEEabrv,bibliography}

\end{document}